
\documentclass[11pt]{article}

\usepackage{amsmath, amssymb}

\usepackage{amsthm}

\usepackage{ifthen}

\usepackage{ifpdf}




\ifpdf
\usepackage[pdftex]{graphicx}
\usepackage[pdftex]{hyperref}
\else
\usepackage[dvips]{graphicx}
\fi

\ifthenelse{\isundefined{\NoGenerateLetterSize}}
{
\ifpdf
\setlength{\pdfpagewidth}{8.5in}
\setlength{\pdfpageheight}{11in}
\else

\fi
}
{}

\newcommand{\Comment}[1]{\ignorespaces}

\long\def\LongVersion#1\LongVersionEnd{#1}
\long\def\ShortVersion#1\ShortVersionEnd{}

\usepackage{tikz}
\usetikzlibrary{automata}
\usetikzlibrary[shapes.multipart]
\usetikzlibrary{arrows}

\LongVersion 
\setlength{\textheight}{8.8in}
\setlength{\textwidth}{6.5in}
\setlength{\evensidemargin}{-0.18in}
\setlength{\oddsidemargin}{-0.18in}
\setlength{\headheight}{10pt}
\setlength{\headsep}{10pt}
\setlength{\topsep}{0in}
\setlength{\topmargin}{0.0in}
\setlength{\itemsep}{0in}

\parskip=0.08in
\LongVersionEnd 

\ShortVersion 
\setlength{\textheight}{8.9in}
\setlength{\textwidth}{6.6in}
\setlength{\evensidemargin}{-0.2in}
\setlength{\oddsidemargin}{-0.2in}
\setlength{\headsep}{10pt}
\setlength{\topmargin}{-0.3in}
\setlength{\columnsep}{0.375in}
\parskip=0.03in
\ShortVersionEnd 

\ShortVersion 
\usepackage{times}
\ShortVersionEnd 

\ShortVersion 
\renewcommand{\paragraph}[1]{\par\noindent\textbf{#1}}
\ShortVersionEnd 

\newtheorem{theorem}{Theorem}[section]
\newtheorem{lemma}[theorem]{Lemma}
\newtheorem{observation}[theorem]{Observation}
\newtheorem{corollary}[theorem]{Corollary}
\newtheorem{proposition}[theorem]{Proposition}

\newtheorem*{theorem*}{Theorem}
\newtheorem*{observation*}{Observation}

\theoremstyle{definition}

\newtheorem*{remark*}{Remark}

\theoremstyle{plain}

\newenvironment{AvoidOverfullParagraph}[0]
{\sloppy\ignorespaces}
{\par\fussy\ignorespacesafterend}

\newenvironment{DenseItemize}[0]
{\begin{itemize} \itemsep0pt \parskip0pt \parsep0pt}
{\end{itemize}}

\newcommand{\Integers}[0]{\mathbb{Z}}
\newcommand{\Neighbors}[0]{\mathit{N}}
\newcommand{\Expectation}[0]{\mathbb{E}}
\newcommand{\Probability}[0]{\mathbb{P}}
\newcommand{\Number}[0]{\sharp}
\newcommand{\BoundPar}[0]{{^{\geq}b}}

\newcommand{\Win}[0]{\mathtt{WIN}}
\newcommand{\Lose}[0]{\mathtt{LOSE}}
\newcommand{\Up}[0]{\mathtt{UP}}
\newcommand{\Down}[0]{\mathtt{DOWN}}
\newcommand{\WinShort}[0]{\mathtt{W}}
\newcommand{\LoseShort}[0]{\mathtt{L}}
\newcommand{\UpShort}[0]{\mathtt{U}}
\newcommand{\DownShort}[0]{\mathtt{D}}
\newcommand{\Geom}[0]{\mathrm{Geom}}
\newcommand{\NegativeBinomial}[0]{\mathrm{NB}}

\newcommand{\Colored}[0]{\mathtt{COLORED}}
\newcommand{\Active}[0]{\mathtt{ACTIVE}}
\newcommand{\Waiting}[0]{\mathtt{WAITING}}
\newcommand{\RandomColoring}[0]{\mathtt{RandColor}}

\LongVersion 
\newcommand{\Section}[0]{Section}
\newcommand{\Appendix}[0]{Appendix}
\newcommand{\Figure}[0]{Figure}
\LongVersionEnd 
\ShortVersion 
\newcommand{\Section}[0]{Sec.}
\newcommand{\Appendix}[0]{App.}
\newcommand{\Figure}[0]{Fig.}
\ShortVersionEnd 

\begin{document}

\title{Stone Age Distributed
\LongVersion 
Computing
\LongVersionEnd 
\ShortVersion 
Computing\footnote{
A preliminary version of this paper was uploaded to arXiv (as suggested in the
FOCS CFP).
Shortly afterwards, the arXiv version got a bit of press coverage (Technology
Review and some other science and tech media) which was restricted to the
possible applications of the paper and in any case, did not include any
discussion of the techniques and proofs. 
To the best of our understanding, this does not contradict the prior
publication and simultaneous submission guidelines of FOCS.}
\ShortVersionEnd 
}

\author{
Yuval Emek
\and
Jasmin Smula
\and
Roger Wattenhofer
\and \\
Computer Engineering and Networks Laboratory (TIK) \\
ETH Zurich, Switzerland
}
\date{}

\begin{titlepage}

\maketitle

\begin{abstract}
The traditional models of distributed computing focus mainly on networks of
computer-like devices that can exchange large messages with their neighbors
and perform arbitrary local computations.
Recently, there is a trend to apply distributed computing methods to networks
of sub-microprocessor devices, e.g., biological cellular networks or networks
of nano-devices.
However, the suitability of the traditional distributed computing models to
these types of networks is questionable:
do tiny bio/nano nodes ``compute'' and/or  ``communicate'' essentially the
same as a computer?
In this paper, we introduce a new model that depicts a network of randomized
finite state machines operating in an asynchronous environment.
Although the computation and communication capabilities of each individual
device in the new model are, by design, much weaker than those of a computer,
we show that some of the most important and extensively studied distributed
computing problems can still be solved efficiently.
\end{abstract}

\renewcommand{\thepage}{}
\end{titlepage}

\pagenumbering{arabic}

\section{Introduction}
Networks are at the core of many scientific areas, be it social sciences
(where networks for instance model human relations), logistics (e.g. traffic),
or electrical engineering (e.g. circuits).
\emph{Distributed computing} is the area that studies the power and
limitations of distributed algorithms and computation in networks.
Due to the major role that the Internet plays today, models targeted at
understanding the fundamental properties of networks focus mainly on
``Internet-capable'' devices.
The standard model in distributed computing is the so called \emph{message
passing} model, where nodes may exchange large messages with their neighbors,
and perform arbitrary local computations.

Some networks though, are not truthfully represented by the classical message
passing model.
For example, \emph{wireless} networks such as ad hoc or sensor networks, whose
research has blossomed in the last decade, require some adaptations of the
message passing model so that it meets the limited capabilities of the
underlying wireless devices more precisely.
More recently, there is a trend to apply distributed computing methods, and in
particular, the message passing model, to networks of sub-microprocessor
devices, for instance networks of biological cells or nano-scale mechanical
devices.
However, the suitability of the message passing model to these types of
networks is far from being certain:
do tiny bio/nano nodes ``compute'' and/or  ``communicate'' essentially the
same as a computer?
Since such nodes will be fundamentally more limited than silicon-based devices,
we believe that there is a need for a network model, where nodes
are by design below the computation and communication capabilities of Turing
machines.

\paragraph{Networked Finite State Machines.}
In this paper, we take a radically different approach:
Instead of imposing additional restrictions on the existing models for
networks of computer-like devices, we introduce an entirely new
model, referred to as \emph{networked finite state machines (nFSM)}, that
depicts a network of randomized finite state machines progressing
in asynchronous steps (refer to \Section{}~\ref{section:Model} for a formal
description).
Under the nFSM model, nodes communicate by transmitting messages belonging to
some finite communication alphabet $\Sigma$ such that a message $\sigma \in
\Sigma$ transmitted by node $u$ is delivered to its neighbors (the same
$\sigma$ to all neighbors) in an asynchronous fashion;
each neighbor $v$ of $u$ has a port corresponding to $u$ in which the last
message delivered from $u$ is stored.

The access of node $v$ to its ports is limited:
each state $q$ in the state set $Q$ of the FSM is associated with some
\emph{query letter} $\sigma = \sigma(q) \in \Sigma$;
if node $v$ resides in state $q$ at some step of the execution, then the next
state and the message transmitted by $v$ at this step are determined by $q$
and by the number $\Number(\sigma)$ of occurrences of $\sigma$ in $v$'s ports.
The crux of the model is that $\Number(\sigma)$ is calculated according to the
\emph{one-two-many}\footnote{
The one-two-many theory states that some small isolated cultures (e.g., the
Piraha tribe of the Amazon \cite{Gordon04}) did not develop a counting system
that goes beyond $2$.
This is reflected in their languages that include words for ``$1$'', ``$2$'',
and ``many'' that stands for any number larger than $2$.
} principle:
the node can only count up to some predetermined \emph{bounding parameter} $b
\in \Integers_{> 0}$ and any value of $\Number(\sigma)$ larger than $b$ cannot
be distinguished from $b$.

In particular, the nFSM model satisfies the following \emph{model
requirements}, that we believe, make it more applicable to the study of
networks consisting of weaker devices such as those mentioned above. \\
\textbf{(M1)}
The model is applicable to arbitrary network topologies. \\
\textbf{(M2)}
All nodes run the same protocol executed by a (randomized) FSM. \\
\textbf{(M3)}
The network operates within an asynchronous environment, with node activation
patterns independent of message delivery patterns. \\
\textbf{(M4)}
All features of the FSM (specifically, the state set $Q$, message alphabet
$\Sigma$, and bounding parameter $b$) are of constant size independent of any
parameter of the network (including the degree of the node executing the
FSM). \\
The last requirement is perhaps the most interesting one as it implies that a
node cannot perform any calculation that involves numbers beyond some
predetermined constant.
This comes in contrast to many distributed algorithms operating under the
message passing model that strongly rely on the ability of a node to perform
such calculations (e.g., count up to some parameter of the network or a
function thereof).

\paragraph{Results.}
Our investigation of the new model begins by implementing an nFSM synchronizer
that practically allows the algorithm designer to assume a synchronous
environment (\Section{}~\ref{section:ConvenientTransformations}).
Then, we show that the computational power of a network operating
under the nFSM model is essentially equivalent to that of a randomized Turing
machine with linear space bound (cf. linear bounded automaton).
In comparison, the computational power of a network operating
under the message passing model is trivially equivalent to that of a (general)
Turing machine, therefore there exist distributed problems that can be solved
under the message passing model in constant time but cannot be solved under
the nFSM model at all (\Section{}~\ref{section:LBA}).

Nevertheless, we show that arguably the most important and extensively studied
problems in distributed computing admit efficient --- namely, with run-time
polylogarithmic in the number of nodes --- algorithms operating under the nFSM
model.
Specifically, we develop such algorithms for computing a maximal independent
set (MIS) in arbitrary graphs (\Section{}~\ref{section:MIS}) and for
$3$-coloring of (undirected) trees
\LongVersion 
(\Section{}~\ref{section:Coloring}).
\LongVersionEnd 
\ShortVersion 
(\Appendix{}~\ref{appendix:Coloring}).
\ShortVersionEnd 
We also develop an efficient algorithm that computes a maximal matching in
arbitrary graphs, but this requires a small unavoidable modification of the
nFSM model that goes beyond the scope of the current version of the paper.

\paragraph{Related Work.}
As mentioned above, the message passing model is the gold standard when it
comes to understanding distributed algorithms.
Several variants exist for this model, differing mainly in the bounds imposed
on the message size and the level of synchronization.
Perhaps the most popular message passing variants are the fully synchronous
\emph{local} and \emph{congest} models \cite{Linial92,PelegBook,JukkaSurvey},
assuming that in each round, a node can send messages to its neighbors
(different messages to different neighbors), receive and interpret the
messages sent to it from its neighbors, and perform an arbitrary local
computation\footnote{
It is important to point out that even though the local and congest
models allow for arbitrary local computations, the existing literature hardly
ever assumes anything that cannot be computed in time polynomial in the size
of the information received thus far;
the rare exceptions are typically clearly mentioned in the text.
} determining, in particular, the messages sent in the next round.
The difference between the two variants is cast in the size of the
communicated messages:
the local model does not impose any restrictions on the message size, hence
it can be used for the purpose of establishing general lower bounds, whereas
the congest model is more information-theoretic, with a (typically
logarithmic) bound on the message size.
Indeed, most theoretical literature dealing with distributed
algorithms relies on one of these two models.

As the congest model still allows for sending different messages to
different neighbors in each round, it was too powerful for many settings.
Instead, with the proliferation of wireless networks, new more restrictive
message passing models appeared such as the \emph{radio network} model
\cite{ChlamtacKuten85}.
In radio networks, nodes still operate in synchronous rounds, where in
each round a node may choose to transmit a message or stay silent.
A transmitted message is received by all neighbors in the
network if the neighbors do not experience interference by concurrently
transmitting nodes in their own neighborhood.
There are several variants, e.g. whether nodes have collision detection, or
not.

Since the radio network model is still too powerful for some wireless
settings, more restrictive models were suggested.
One such example is the \emph{beeping} model
\cite{Flury2010Slotted,CornejoKuhn10}, where in each round a node can
either beep or stay silent, and a silent node can only distinguish between the
case in which no node in its neighborhood beeps and the case in which at least
one node beeps.
Efficient algorithms and lower bounds for the MIS problem under the beeping
model were developed by Afek et
al.~\cite{AfekAlonBarad+11,AfekAlonBarJoseph+11}.
Note that the beeping model resembles our nFSM model in the sense that the
``beeping rule'' can be viewed as counting under the one-two-many principle
with bounding parameter $b = 1$.
However, it is much stronger in other perspectives:
(i) the beeping model assumes synchronous communication and does not seem to
have a natural asynchronous variant, thus it does not satisfy requirement
(M3); and
(ii) the local computation is performed by a Turing machine whose memory is
allowed to grow with the network (this is crucial for the algorithms of Afek
et al.~\cite{AfekAlonBarad+11,AfekAlonBarJoseph+11}), thus it does not satisfy
requirements (M2) and (M4).

Our nFSM model is a generalization of the extensively studied \emph{cellular
automaton} model \cite{vonNeumann66,Gardner70,Wolfram02} that captures a
network of FSMs, arranged in a grid topology (some other highly regular
topologies were also considered), where the transition of each node depends on
its current state and the states of its neighbors.
Still, the nFSM model differs from the cellular automaton model in many
aspects;
in particular, the latter model is not applicable for non-regular network
topologies, in contrast to requirement (M1), and to the most part,
it also does not support asynchronous environments (at least not as asynchrony
is grasped in the current paper), in contrast to requirement (M3).

Another model that resembles the nFSM model is that of \emph{communicating
automata} \cite{BrandZafiropulo83}.
This model also assumes that each node in the network operates a FSM in an
asynchronous manner, however the steps of the FSMs are message driven:
for each state $q$ of node $v$ and for each message $m$ that node $v$ may
receive from an adjacent node $u$ while residing in state $q$, the transition
function of $v$ should have an entry characterized by the $3$-tuple $(q, u,
m)$ that determines its next move.
As such, different nodes would typically operate different FSMs, hence the
model does not satisfy requirement (M2), and more importantly, the size of the
FSM operated by node $v$ inherently depends on the degree of $v$, hence it
does not satisfy requirement (M4).
Moreover, the node activation pattern is driven by the incoming messages, so
it also does not satisfy requirement (M3).

\paragraph{Applicability to Biological Cellular Networks.}
Regardless of the theoretical interest in implementing efficient algorithms
using weaker assumptions, we believe that our new model and results should be
appealing to anyone interested in understanding the computational aspects of
biological cellular networks.
A basic dogma in biology (see, e.g., \cite{BiologyTextBook}) states that all
cells communicate and that they do so by emitting special kinds of proteins
(e.g., cytokines and chemokines in the immune system) that can be recognized
by designated receptors, thus enabling neighboring cells to distinguish
between different concentration levels of these proteins, which, after a
signaling cascade, leads to different gene expression.

Translated to the language of the nFSM model,
the emitted proteins correspond to the letters of the communication alphabet,
where the actual emission corresponds to transmitting a letter, and
the ability of a cell to distinguish between different concentration levels of
these proteins corresponds to the manner in which the nodes in our model
interpret the content of their ports.
Using an FSM as the underlying computational model of each node seems to be
the right choice especially in the biological setting as demonstrated by
Benenson et al.~\cite{Benenson+01} who showed that essentially any FSM can be
implemented by enzymes found in cells' nuclei.
One may wonder if the specific problems studied in the current paper have any
relevance to biological cellular networks.
Indeed, Afek et al.~\cite{AfekAlonBarad+11} discovered that a biological
process that occurs during the development of the nervous system of a fly is
in fact equivalent to solving the MIS problem.

\section{Model}
\label{section:Model}
Throughout, we assume a network represented by a finite undirected
graph $G = (V, E)$.
Under the \emph{networked finite state machines (nFSM)} model, each node $v
\in V$ runs a protocol depicted by the $8$-tuple
$$
\Pi = \left\langle Q, Q_I, Q_O, \Sigma, \sigma_0, b, \lambda, \delta
\right\rangle,
$$
where
\begin{DenseItemize}

\item
$Q$ is a finite set of \emph{states};

\item
$Q_I \subseteq Q$ is the subset of \emph{input states};

\item
$Q_O \subseteq Q$ is the subset of \emph{output states};

\item
$\Sigma$ is a finite \emph{communication alphabet};

\item
$\sigma_0 \in \Sigma$ is the \emph{initial letter};

\item
$b \in \mathbb{Z}_{>0}$ is a \emph{bounding parameter};
let $B = \{0, 1, \dots , b - 1, \BoundPar\}$ be a set of $b + 1$
distinguishable symbols;

\item
$\lambda : Q \rightarrow \Sigma$ assigns a \emph{query letter} $\sigma \in
\Sigma$ to every state $q \in Q$; and

\item
$\delta : Q \times B \rightarrow 2^{Q \times (\Sigma \cup \{\varepsilon\})}$
is the \emph{transition function}.

\end{DenseItemize}
It is important to point out that protocol $\Pi$ is oblivious to the graph
$G$.
In fact, the number of states in $Q$, the size of the alphabet $\Sigma$,
and the bounding parameter $b$ are all assumed to be universal constants,
independent of any parameter of the graph $G$.
In particular, the protocol executed by node $v \in V$ does not depend on the
degree of $v$ in $G$.
We now turn to describe the semantics of the nFSM model.

\paragraph{Communication.}
Node $v$ communicates with its adjacent nodes in $G$ by \emph{transmitting}
messages.
A transmitted message consists of a single letter $\sigma \in \Sigma$ and it
is assumed that this letter is delivered to all neighbors $u$ of $v$.
Each neighbor $u$ has a \emph{port} $\psi_{u}(v)$ (a different port for every
adjacent node $v$) in which the last message $\sigma$ received from $v$ is
stored.
At the beginning of the execution, all ports store the initial letter
$\sigma_0$.
It will be convenient to consider the case in which $v$ does not transmit any
message (and hence does not affect the corresponding ports of the adjacent
nodes) as a transmission of the special \emph{empty symbol} $\varepsilon$.

\paragraph{Execution.}
The execution of node $v$ progresses in discrete \emph{steps} indexed by the
positive integers.
At each step $t \in \Integers_{> 0}$, $v$ resides in some state $q \in Q$.
Let $\lambda(q) = \sigma \in \Sigma$ be the query letter that $\lambda$
assigns to state $q$ and let $\Number(\sigma)$ be the number of occurrences of
$\sigma$ in $v$'s ports in step $t$.
Then, the pair $(q', \sigma')$ of state $q' \in Q$ in which $v$ resides in step
$t + 1$ and message $\sigma' \in \Sigma \cup \{\varepsilon\}$ transmitted by
$v$ in step $t$ (recall that $\varepsilon$ indicates that no message is
transmitted) is chosen \emph{uniformly at random} (and independently of all
other random choices) among the pairs in
$$
\delta \left( q, f_{b} \left( \Number(\sigma) \right) \right) \subseteq Q
\times (\Sigma \cup \{\varepsilon\}) \ ,
$$
where $f_{b} : \Integers_{\geq 0} \rightarrow B$ is defined as
$$
f_{b}(x) =
\left\{
\begin{array}{ll}
x & \text{ if } 0 \leq x \leq b - 1 \ ; \\
\BoundPar & \text{ otherwise} \ . 
\end{array}
\right.
$$
Informally, this can be thought of as if $v$ queries its ports for occurrences
of $\sigma$ and ``observes'' the exact value of $\Number(\sigma)$ as long as it
is smaller than the bounding parameter $b$;
otherwise, $v$ merely ``observes'' that $\Number(\sigma) \geq b$ which is
indicated by the symbol $\BoundPar$.

\paragraph{Input and Output.}
Initially (in step $1$), each node resides in some of the input states in
$Q_{I}$.
The choice of the initial state of node $v \in V$ reflects the input passed to
$v$ at the beginning of the execution.
This allows our model to cope with distributed problems in which different
nodes get different input symbols.
When dealing with problems in which the nodes do not get any initial input
(such as the graph theoretic problems addressed in this paper), we shall
assume that $Q_{I}$ contains a single \emph{initial} state.

We say that the (global) execution of the protocol is in an \emph{output
configuration} if all nodes reside in output states of $Q_{O}$.
If this is the case, then the output of node $v \in V$ is determined by the
output state $q \in Q_{O}$ in which $v$ resides.

\paragraph{Asynchrony.}
The nodes are assumed to operate in an \emph{asynchronous} environment.
This asynchrony has two facets:
First, for the sake of convenience, we assume that the actual application of
the transition function in each step $t \in \Integers_{> 0}$ of node $v \in V$
is instantaneous (namely, lasts zero time) and occurs at the end of the
step;\footnote{
This assumption can be lifted at the cost of a more complicated definition of
the adversarial policy described soon.
}
the length of step $t$ of node $v$, denoted $L_{v, t}$, is defined as the time
difference between the application of the transition function in step $t - 1$
and that of step $t$.
It is assumed that $L_{v, t}$ is finite, but apart from that, we do not make
any further assumptions on this length, that is, the step length $L_{v, t}$ is
determined by the adversary independently of all other step lengths $L_{v', t'}$.
In particular, we do not assume any synchronization between the steps of
different nodes whatsoever.

Another facet of the asynchronous environment is that a message transmitted by
node $v$ in step $t$ (if such a message is transmitted) is assumed to reach
the port $\psi_{u}(v)$ of an adjacent node $u$ after a finite time delay,
denoted $D_{v, t, u}$.
We assume that if $v$ transmits message $\sigma_1 \in \Sigma$ in step $t_1$
and message $\sigma_2 \in \Sigma$ in step $t_2 > t_1$, then $\sigma_1$ reaches
$u$ before $\sigma_2$ does.
Apart from this ``FIFO'' assumption, we do not make any other assumptions on
the delays $D_{v, t, u}$.
In particular, this means that under certain circumstances, the
adversary may overwrite message $\sigma_1$ with message $\sigma_2$ in port
$\psi_{u}(v)$ of $u$ so that $u$ will never ``know'' that message $\sigma_1$
was transmitted.\footnote{
Often, much stronger assumptions are made in the literature.
For example, a common assumption for asynchronous environments is that the
port of node $u$ corresponding to the adjacent node $v$ is implemented by a
buffer so that messages cannot be ``lost''.
We do not make any such assumption for our nFSM model.}

Consequently, a \emph{policy} of the adversary is captured by:
(1) the length $L_{v, t}$ of step $t$ of node $v$ for every $v \in V$ and $t
\in \Integers_{> 0}$; and
(2) the delay $D_{v, t, u}$ of the delivery of the transmission of node $v$
in step $t$ to an adjacent node $u$ for every $v \in V$, $t \in
\Integers_{> 0}$, and $u \in \Neighbors(v)$.\footnote{
We use the standard notation $\Neighbors(v)$ for the \emph{neighborhood} of
node $v$ in $G$, namely, the subset of nodes adjacent to $v$.
}
Assuming that the adversary is oblivious to the random coin tosses of the
nodes, an adversarial policy is depicted by infinite sequences of $L_{v, t}$
and $D_{v, t, u}$ parameters.

For further information on asynchronous environments, we point the reader to
one of the standard textbooks \cite{PelegBook,LynchBook}.

\paragraph{Correctness and Run-Time Measures.}
A protocol $\Pi$ for problem $P$ is said to be \emph{correct} under the nFSM
model if
for every instance of $P$ and for every adversarial policy,
$\Pi$ reaches an output configuration within finite time with probability $1$,
and for every output configuration reached by $\Pi$ with positive probability,
the output of the nodes is a valid solution to $P$.
Given a correct protocol $\Pi$, the complexity measure that interests us in
the current paper is the \emph{run-time} of $\Pi$ defined as follows.

Consider some instance $\mathcal{I}$ of problem $P$.
Given an adversarial policy $\mathcal{A}$ and a sequence (actually an
$n$-tuple of sequences) $\mathcal{R}$ of random coin tosses that lead to an
output configuration within finite time, the run-time $T_{\Pi}(\mathcal{I},
\mathcal{A}, \mathcal{R})$ of $\Pi$ on $\mathcal{I}$ with respect to
$\mathcal{A}$ and $\mathcal{R}$ is defined as the (possibly fractional) number
of \emph{time units}\footnote{
Note that time units are defined solely for the purpose of the analysis.
Under an asynchronous environment, the nodes have no notion of time and in
particular, they cannot measure a single time unit.
} that pass from the beginning of the execution until the
first time the protocol reaches an output configuration, where a time unit is
defined to be the maximum among all step length parameters $L_{v, t}$ and
delivery delay parameters $D_{v, t, u}$ appearing in $\mathcal{A}$ before the
output configuration is reached.
Let $T_{\Pi}(\mathcal{I}, \mathcal{A})$ denote the random variable that
depicts the run-time of $\Pi$ on $\mathcal{I}$ with respect to $\mathcal{A}$.
Following the standard procedure in this regard, we say that the run-time of a
correct protocol $\Pi$ for problem $P$ is $f(n)$ if for every $n$-node
instance $\mathcal{I}$ of $P$ and for every adversarial policy $\mathcal{A}$,
it holds that $T_{\Pi}(\mathcal{I}, \mathcal{A})$ is at most $f(n)$ in
expectation and with high probability.
The protocol is said to be \emph{efficient} if its run-time is polylogarithmic
in the size of the network (cf. \cite{Linial92}).

\section{Convenient Transformations}
\label{section:ConvenientTransformations}

In this section, we show that the nFSM protocol designer may, in fact, assume
a slightly more ``user-friendly'' environment than the one described in
\Section{}~\ref{section:Model}.
This is based on the design of black-box compilers transforming a protocol
that makes strong assumptions on the environment into one that does not make
any such assumptions.
\LongVersion 
Specifically, the assumptions that can be lifted that way are synchrony
(\Section{}~\ref{section:Synchronizer}), and multiple-letter queries
(\Section{}~\ref{section:MultiLetter}).
\LongVersionEnd 

\LongVersion 
\subsection{Implementing a Synchronizer}
\label{section:Synchronizer}
\LongVersionEnd 
As described in \Section{}~\ref{section:Model}, the nFSM model assumes an
asynchronous environment.
Nevertheless, it will be convenient to extend the nFSM model to
\emph{synchronous} environments.
One natural such extension augments the model described in
\Section{}~\ref{section:Model} with the following two \emph{synchronization
properties} for every two adjacent nodes $u, v \in V$ and for every $t \in
\Integers_{> 0}$: \\
\textbf{(S1)}
when node $u$ is in step $t$, node $v$ is in step $t - 1$, $t$, or $t + 1$;
and \\
\textbf{(S2)}
at the end of step $t + 1$ of $u$, port $\psi_{u}(v)$ stores the message
transmitted by $v$ in step $t$ of $v$'s execution (or the last message
transmitted by $v$ prior to step $t$ if $v$ does not transmit any message in
step $t$). \\
An environment in which properties (S1) and (S2) are guaranteed to hold is
called a \emph{locally synchronous} environment.
Local-only communication can never achieve global synchrony, however, research
in the message passing model has shown that local synchrony is often
sufficient to provide efficient algorithms \cite{Awer85,AP90,APPS92}.
To distinguish a protocol assumed to operate in a locally synchronous
environment from those making no such assumptions, we shall often refer to the
execution steps of the former as \emph{rounds} (cf. fully synchronized
protocols).
\LongVersion 
Our goal in this section is to establish the following theorem.
\LongVersionEnd 

\begin{theorem} \label{theorem:Synchronizer}
Every nFSM protocol $\Pi = \langle Q, Q_I, Q_O, \Sigma, \sigma_0, b, \lambda,
\delta \rangle$ designed to operate in a locally synchronous environment can be
simulated in an asynchronous environment by a protocol $\widehat{\Pi}$ at the
cost of a constant multiplicative run-time overhead.
\end{theorem}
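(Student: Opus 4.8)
The plan is to build a classical $\alpha$-synchronizer tailored to the nFSM constraints, i.e., a constant-size protocol that lets each node advance to its round-$(t+1)$ simulation only once it is certain that every neighbor has completed its round-$t$ transition and that the round-$t$ messages have propagated into its ports. The key difficulty is that an nFSM node has no counters beyond $b$, so it cannot tally ``how many neighbors are done'' or remember a round index; instead, $\widehat{\Pi}$ will run with a tiny cyclic clock, say round labels taken modulo $3$ (or a small constant), and synchronization will be enforced by a local handshake encoded directly in the communication alphabet.

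First I would describe the simulated alphabet and state set of $\widehat{\Pi}$. Each transmitted message of $\widehat{\Pi}$ carries a pair: the letter $\Pi$ would have transmitted in the current simulated round, together with a phase tag in $\{0,1,2\}$ (the round number mod $3$). Each node of $\widehat{\Pi}$ maintains, inside its constant-size state, (i) the current simulated state $q \in Q$ of $\Pi$, (ii) the current phase tag, and (iii) a small ``ready/not-ready'' flag. A node in phase $p$ is permitted to apply $\Pi$'s transition (moving to phase $p+1 \bmod 3$) only when all of its ports carry messages tagged with phase $p$ — never $p-1$. Because the port content is just ``the last message received'', and because of the FIFO delivery assumption, seeing phase $p$ in every port certifies that every neighbor has already transmitted its round-$p$ message, which in $\Pi$'s locally synchronous semantics is exactly what port $\psi_u(v)$ must reflect at the end of round $p+1$. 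The modulo-$3$ scheme is what prevents the classic ambiguity between ``a neighbor hasn't started round $p$ yet'' and ``a neighbor already finished round $p$'': with three labels and property (S1) bounding neighbors to within one round of each other, the three states $p-1,p,p+1$ are pairwise distinguishable, so a node can always tell whether a lagging-looking neighbor is behind or ahead.

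Next I would verify that $\widehat{\Pi}$ actually makes progress. The invariant to maintain is that at all times, the phase tags of any two adjacent nodes differ by at most $1$ (mod $3$), and a node waits only on strictly-behind neighbors. From this invariant, a standard argument shows the configuration is never deadlocked: the set of nodes with the globally minimum phase is nonempty, none of its members is waiting on anyone, so at least one of them will fire within one time unit (by finiteness of the $L_{v,t}$ and $D_{v,t,u}$ parameters of the adversarial policy). One would also check that when a node fires, it simulates $\Pi$ faithfully — its port contents, restricted to the letter components, are precisely a valid port configuration that $\Pi$ could see in a locally synchronous execution, so the random choice from $\delta$ is distributed correctly and the simulated execution is a legitimate execution of $\Pi$. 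Input states of $\widehat{\Pi}$ encode $Q_I$ with phase $0$; output states encode $Q_O$ (tagging is irrelevant once $\Pi$ halts), so correctness of $\Pi$ transfers to $\widehat{\Pi}$.

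Finally, the run-time bound. I would argue that in any time unit, every node that is not yet halted either fires or is waiting on a neighbor that is strictly behind; since phase differences are bounded by $1$, a potential-function / amortization argument over the ``total phase progress'' $\sum_v (\text{rounds completed by } v)$ shows that $\Theta(n)$ simulated-round-transitions happen every $O(1)$ time units? — more carefully, the right statement is that each simulated round of $\Pi$ completes within a constant number of time units, because once the slowest node reaches round $t$, within one more time unit its round-$t$ message reaches every neighbor, enabling them, and so on; the dependency chain has length at most the (constant) bound from (S1)-style reasoning once one normalizes. Thus if $\Pi$ runs in $R$ rounds, $\widehat{\Pi}$ runs in $O(R)$ time units, giving the claimed constant multiplicative overhead. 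I expect the main obstacle to be exactly this last step: pinning down the amortized argument so that a neighbor being temporarily blocked does not cascade into a non-constant slowdown — this is where the bounded-phase invariant and the obliviousness of the adversary to the coin tosses must be used with care, and it is the part of the proof that most resembles the classical synchronizer analyses of \cite{Awer85,AP90} adapted to the memoryless, counter-free setting.
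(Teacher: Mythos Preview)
Your high-level plan---cyclic phase tags modulo $3$ and a local handshake that blocks a node until no neighbor lags behind---is exactly the paper's approach, but there is a real gap in your message encoding that breaks correctness. You let each $\widehat{\Pi}$-message carry only the \emph{current} simulated round's $\Pi$-letter together with its phase tag. In the nFSM model a port stores only the last delivered message, and nothing stops a neighbor $u$ from moving one phase ahead of $v$: once $v$ has transmitted its tag-$p$ message, $u$ may see all its own ports at tag $p$, fire, and transmit a tag-$(p{+}1)$ message that overwrites $\psi_v(u)$. If your firing rule is read strictly as ``all ports carry tag exactly $p$'', this deadlocks (on a path $v_1$--$v_2$--$v_3$, let $v_1$ and $v_3$ fire first; $v_2$ at phase $0$ now sees both ports at tag $1$ and can never fire, while $v_1,v_3$ wait on $v_2$). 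If it is read as ``no port carries tag $p-1$'' (allowing $p$ or $p{+}1$), there is no deadlock, but $u$'s round-$p$ letter---the one $v$ needs for $\Pi$'s query---has been erased. The paper's fix is to have every message encode \emph{both} the current and the previous round's $\Pi$-letter, $M_u(t)=(\sigma_{t-1},\sigma_t,\,t\bmod 3)$, so that whichever of $M_u(t-1)$ or $M_u(t)$ sits in $\psi_v(u)$ after the pausing stage, $v$ can still recover $\sigma_{t-1}$.

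That fix forces a second issue you gloss over with ``its port contents, restricted to the letter components, are precisely a valid port configuration'': the event ``$u$ sent $\sigma$ in round $t-1$'' now corresponds to $\Theta(|\Sigma|)$ distinct $\widehat{\Sigma}$-letters, and $v$ must sum their port-counts over several non-atomic steps. During that iteration a neighbor may move from tag $t-1$ to tag $t$, shifting mass between summands and corrupting the total. The paper handles this with a three-pass read (compute $\varphi_1$ over the tag-$(t{-}1)$ letters, then $\varphi_2$ over the tag-$t$ letters, then $\varphi_3$ over tag-$(t{-}1)$ again; accept only if $\varphi_1=\varphi_3$, else restart), with at most $b$ restarts since $\varphi_1$ can only decrease. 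Your run-time sketch is in the right spirit; once the above machinery is in place, the paper's accounting is simply the induction $\tau(V,t+1)\le\tau(V,t)+O(1)$.
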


The procedure in charge of the simulation promised in
Theorem~\ref{theorem:Synchronizer}
\ShortVersion 
(whose proof is deferred to \Appendix{}~\ref{appendix:Synchronizer})
\ShortVersionEnd 
is referred to as a \emph{synchronizer} \cite{Awer85}.
\LongVersion 
The remainder of \Section{}~\ref{section:Synchronizer} is dedicated to the
design (and analysis) of a synchronizer for the nFSM model.
\LongVersionEnd 

\def\DetailsSynchronizer{
\paragraph{Overview.}
Round $t \in \Integers_{> 0}$ of node $v \in V$ under $\Pi$ is simulated by $O
(1)$ contiguous steps under $\widehat{\Pi}$;
the collection of these steps is referred to as $v$'s \emph{simulation phase}
of round $t$.
Protocol $\widehat{\Pi}$ is designed so that $v$ maintains the value of $t
\bmod 3$, referred to as the \emph{trit} (trinary digit) of round $t$, which
is also encoded in the message transmitted by $v$ at the end of round
$t$.\footnote{
Note that maintaining the value of $t \bmod 2$ is insufficient for the sake
of reaching synchronization.
}
The main principle behind our synchronizer is that node $v$ will not move to
the simulation phase of round $t + 1$ while its ports still contain messages
sent in a round whose trit is $t - 1 \bmod 3$.

Under $\Pi$, the decisions made by node $v$ at round $t$ should be based
on the messages transmitted by all neighbors $u$ of $v$ at round $t - 1$.
However, during $v$'s simulation phase of round $t$, port $\psi_{v}(u)$ may
contain messages transmitted at round $t - 1$ or at round $t$ under $\Pi$.
The latter case is problematic since the message transmitted by $u$ in
the simulation phase of round $t - 1$ is overwritten by that transmitted in the
simulation phase of round $t$.
To avoid this obstacle, a message transmitted by node $u$ under
$\widehat{\Pi}$ at the end of the simulation phase of round $t$ also
encodes the message that $u$ transmitted under $\Pi$ at round $t - 1$.

So, if $v$ resides in a state whose query letter is $\sigma \in \Sigma$ in
round $t$ under $\Pi$, then under $\widehat{\Pi}$, $v$ should query for all
$\widehat{\Sigma}$-letters encoding a transmission of $\sigma$ at round $t -
1$.
Since there are several such letters, a carefully designed feature should
be used so that $\widehat{\Pi}$ accounts for their combined number.

\paragraph{Protocol $\widehat{\Pi}$.}
Let
$$
\widehat{\Pi} ~ = ~
\left\langle \widehat{Q}, \widehat{Q}_{I}, \widehat{Q}_{O}, \widehat{\Sigma},
\widehat{\sigma}_{0}, b, \widehat{\lambda}, \widehat{\delta} \right\rangle \ .
$$
Consider node $v \in V$ and round $t \in \Integers_{> 0}$.
As the name implies, node $v$'s \emph{simulation phase} of round $t$ under
$\widehat{\Pi}$, denoted $\phi_{v}(t)$, corresponds to round $t$ of $\Pi$.
Protocol $\widehat{\Pi}$ is designed so that at every step in $\phi_{v}(t)$
other than the last one, $v$ does not transmit any message (indicated by
transmitting $\varepsilon$), and at the last step of the simulation phase, $v$
always transmits some message $\widehat{\sigma} \in \widehat{\Sigma}$, denoted
$M_{v}(t)$.

The alphabet $\widehat{\Sigma}$ is defined to be
$$
\Sigma' ~ = ~
\left( \Sigma \cup \{\varepsilon\} \right) \times
\left( \Sigma \cup \{\varepsilon\} \right) \times
\{ 0, 1, 2 \} \ .
$$
The semantics of the message $M_{v}(t) = (\sigma, \sigma', j)$ sent by
node $v$ at the last step of the simulation phase $\phi_{v}(t)$ is that:
$v$ transmits $\sigma \in \Sigma \cup \{\varepsilon\}$ at round $t - 1$ under
$\Pi$;
$v$ transmits $\sigma' \in \Sigma \cup \{\varepsilon\}$ at round $t$
under $\Pi$; and
$j = t \bmod 3$.
Following that logic, we set $\widehat{\sigma}_{0} = (\varepsilon, \sigma_{0},
0)$.

The state set $\widehat{Q}$ of $\widehat{\Pi}$ is defined to be
$$
\widehat{Q} ~ = ~
\left( \bigcup_{q \in Q} \left( P_{q} \cup S_{q} \right) \right) \times
\{ 0, 1, 2 \} \ ,
$$
where $P_{q} \times \{j\}$ and $S_{q} \times \{j\}$, $q \in Q$, $j \in \{ 0,
1, 2 \}$, are referred to as the \emph{pausing} and \emph{simulating} features,
respectively, whose role will be clarified soon.
Suppose that $v$ resides in state $q \in Q$ in step $t$ under $\Pi$ and that
$j = t \bmod 3$.
Then, throughout $\phi_{v}(t)$, node $v$ resides in some state in $(P_{q} \cup
S_{q}) \times \{j\}$.
In particular, in the first steps of the simulation phase, $v$ resides in
states of the pausing feature $P_{q} \times \{j\}$, and then at some stage it
switches to the simulating feature $S_{q} \times \{j\}$ and remains in its
states until the end of the simulation phase.

\paragraph{The Pausing Feature.}
For the simulation phase of round $t$, we denote the letters in $(\Sigma \cup
\{\varepsilon\}) \times (\Sigma \cup \{\varepsilon\}) \times \{ j - 2 \}$ as
\emph{dirty} and the letters in $(\Sigma \cup \{\varepsilon\}) \times (\Sigma
\cup \{\varepsilon\}) \times \{ j - 1, j \}$ as \emph{clean}.\footnote{
Throughout this section, arithmetic involving the parameter $j$ is done modulo
$3$.
}
The purpose of the pausing feature $P_{q} \times \{j\}$ is to pause the
execution of $v$ until its ports do not contain any dirty letter.
This is carried out by including in $P_{q} \times \{j\}$ a state $p_{\sigma,
\sigma'}$ for every $\sigma, \sigma' \in \Sigma \cup \{\varepsilon\}$;
the query letter of $p_{\sigma, \sigma'}$ is (the dirty letter)
$\widehat{\lambda}(p_{\sigma, \sigma'}) = (\sigma, \sigma', j - 2)$ and the
transition function $\widehat{\delta}$ is designed so that $v$ moves to the
next (according to some fixed order) state in the feature $P_{q} \times \{j\}$
if and only if there are no ports storing the query letter.

We argue that the pausing feature guarantees synchronization property (S1).
For the sake of the analysis, it is convenient to assume the existence of a
fully synchronous simulation phase of a virtual round $0$;
upon completion of this simulation phase (at the beginning of the execution),
every node $v \in V$ transmits the message $M_{v}(0) = \widehat{\sigma}_{0}$.
We are now ready to establish the following lemma.

\begin{lemma} \label{lemma:PausingFeature}
For every $t \in \Integers_{> 0}$, $v \in V$, and $u \in \Neighbors(v)$, when
$v$ completes the pausing feature of $\phi_{v}(t)$, port $\psi_{v}(u)$ stores
either $M_{u}(t - 1)$ or $M_{u}(t)$.
\end{lemma}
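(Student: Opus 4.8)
The plan is to prove the lemma by induction on $t$, tracking two facts in tandem: (i) the pausing feature of $\phi_v(t)$ terminates (so ``when $v$ completes the pausing feature'' is a well-defined event), and (ii) at that moment each port $\psi_v(u)$ holds $M_u(t-1)$ or $M_u(t)$. The base case $t=1$ uses the virtual round $0$: at the start of the execution every port $\psi_v(u)$ holds $\widehat{\sigma}_0 = M_u(0)$, whose trit is $0 = 0 \bmod 3$; since the simulation phase $\phi_v(1)$ works with $j = 1 \bmod 3$, the dirty trit is $j-2 = -1 \equiv 2$, so no port is dirty, the pausing feature of $\phi_v(1)$ terminates immediately, and the ports still hold $M_u(0) = M_u(t-1)$, establishing the base case.

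For the inductive step, assume the claim holds for $t-1$ (for all nodes) and fix $v$, $u \in \Neighbors(v)$. First I would argue termination of the pausing feature of $\phi_v(t)$: by the induction hypothesis, when $u$ completed the pausing feature of $\phi_u(t-1)$ its ports no longer held messages with trit $(t-1)-2 \equiv t \bmod 3$... — more to the point, the relevant fact is that $u$'s next transmission is $M_u(t-1)$ (trit $t-1 \bmod 3$), and the message $u$ sent just before that was $M_u(t-2)$ (trit $t-2 \bmod 3$, which is the dirty trit for phase $\phi_v(t)$ since $t-2 \equiv j-2$). By the FIFO assumption, $\psi_v(u)$ can only move from $M_u(t-2)$ to $M_u(t-1)$ to $M_u(t)$; it cannot skip ahead to $M_u(t-1)$ until $u$ has actually transmitted $M_u(t-1)$, which it does in finite time (each simulation phase consists of $O(1)$ steps, each of finite length by the model's asynchrony assumptions, and $u$ cannot be stuck: inductively its own pausing feature terminates, and the simulating feature has no waiting states). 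Moreover once $u$ transmits $M_u(t-1)$, the delivery delay $D$ is finite, so $\psi_v(u)$ eventually holds a clean letter; since the pausing feature advances through its (finitely many) states precisely when no port is dirty, it terminates.

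Second, I would pin down the content of $\psi_v(u)$ at that termination moment. Over all neighbors $u$, the port $\psi_v(u)$ is clean, i.e. holds a message with trit in $\{j-1, j\} = \{t-1 \bmod 3, t \bmod 3\}$. It remains to rule out that the port holds some $M_u(t')$ with $t' < t-1$ and $t' \not\equiv t-2$ — but such a $t'$ would have trit $t' \bmod 3 \in \{t-1, t\} \bmod 3$ only if $t' \equiv t-1$ or $t' \equiv t$, and by FIFO the sequence of messages delivered to $\psi_v(u)$ is $M_u(0), M_u(1), \dots$ in order, so after a message of trit $t-2 \bmod 3$ has been overwritten the next candidates are exactly $M_u(t-1)$ and $M_u(t)$; combined with the induction hypothesis for $\phi_v(t-1)$ (the port held $M_u(t-2)$ or $M_u(t-1)$ at the end of the previous pausing feature) and the monotone FIFO progression, the port now holds $M_u(t-1)$ or $M_u(t)$.

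The main obstacle I expect is the termination/liveness argument — making rigorous that $u$ cannot perpetually stall before transmitting $M_u(t-1)$, which requires a simultaneous induction over all nodes (not just a single chain $v \to u$), since $u$'s own pausing feature depends symmetrically on $u$'s neighbors. The clean way to handle this is to strengthen the induction hypothesis to: \emph{for every $t$ and every node $w$, $w$ completes the simulation phase $\phi_w(t)$ in finite time, and the conclusion of the lemma holds}, and to do the induction on $t$ uniformly over all nodes at once, so that when analyzing $\phi_v(t)$ all facts about round $t-1$ are available for every node, including $u$. The trit bookkeeping (all arithmetic mod $3$, and why mod $2$ fails as remarked in the footnote) is then routine: two trits would not let $v$ distinguish ``$u$ is one round behind'' from ``$u$ is one round ahead,'' whereas with three trits the dirty trit $j-2$, the current trit $j$, and the next trit $j-1$ are pairwise distinct.
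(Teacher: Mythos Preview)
Your inductive framework and the lower-bound half of the argument are fine: invoking the hypothesis for $\phi_{v}(t-1)$ together with FIFO monotonicity and the definition of ``clean'' correctly yields that when $v$ finishes the pausing feature of $\phi_{v}(t)$, the port $\psi_{v}(u)$ holds some $M_{u}(t')$ with $t' \geq t - 1$. The gap is the upper bound $t' \leq t$. Your sentence ``after a message of trit $t-2 \bmod 3$ has been overwritten the next candidates are exactly $M_{u}(t-1)$ and $M_{u}(t)$'' is not justified: FIFO only says the deliveries arrive in the order $M_{u}(t-1), M_{u}(t), M_{u}(t+1), \ldots$, and nothing in your argument prevents $u$ from having already raced ahead and transmitted $M_{u}(t+1)$ (whose trit is $t+1 \equiv t-2 \bmod 3$, hence dirty---but then $M_{u}(t+2)$ is clean again, etc.) before $v$ exits its pausing feature. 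Your strengthened induction hypothesis gives information only about rounds up to $t-1$, so it cannot by itself bound how far $u$ has progressed at time $\tau_{v}$.

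This upper bound is precisely the nontrivial part of the paper's proof, and it requires the \emph{symmetric} use of the induction hypothesis---applied to $u$ rather than $v$---together with a short contradiction argument. Concretely: if $\tau_{v} > \tau_{u}$ and $\psi_{v}(u)$ already held $M_{u}(t+1)$ at time $\tau_{v}$, then look at $u$'s side. By the induction hypothesis for $u$ at round $t-1$ and the same clean/FIFO reasoning, at time $\tau_{u}$ the port $\psi_{u}(v)$ holds $M_{v}(t-1)$ (it cannot hold $M_{v}(t)$ since $\tau_{u} < \tau_{v}$ and $v$ has not yet finished $\phi_{v}(t)$). But $M_{v}(t-1)$ is dirty for $\phi_{u}(t+1)$, so $u$ cannot exit the pausing feature of $\phi_{u}(t+1)$---and hence cannot transmit $M_{u}(t+1)$---until after $v$ has transmitted $M_{v}(t)$, which in turn happens only after $\tau_{v}$. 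This contradicts $\psi_{v}(u)$ holding $M_{u}(t+1)$ at time $\tau_{v}$. Without this two-sided interlocking argument, the proof is incomplete.
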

\begin{proof}
By induction on $t$.
The base case of round $t = 0$ holds by our assumption that $\phi_{v}(0)$ and
$\phi_{u}(0)$ are fully synchronous.
Assume by induction that the assertion holds for round $t - 1$.
Applying the inductive hypothesis to both $u$ and $v$, we conclude that
(1) when $v$ completes the pausing feature of $\phi_{v}(t - 1)$, port
$\psi_{v}(u)$ stores either $M_{u}(t - 2)$ or $M_{u}(t - 1)$; and
(2) when $u$ completes the pausing feature of $\phi_{u}(t - 1)$, port
$\psi_{u}(v)$ stores either $M_{v}(t - 2)$ or $M_{v}(t - 1)$.

Let $\tau_{u}$ and $\tau_{v}$ denote the times at which $u$ and $v$ complete
the pausing feature of $\phi_{u}(t)$ and $\phi_{v}(t)$, respectively.
Since $v$ cannot complete the pausing feature of $\phi_{v}(t)$ while
$M_{u}(t - 2)$ is still stored in $\psi_{v}(u)$, it follows that at time
$\tau_{v}$, port $\psi_{v}(u)$ stores the message $M_{u}(t')$ for some $t' \geq
t - 1$.
Our goal in the remainder of this proof is to show that $t' \leq t$.
If $\tau_{v} < \tau_{u}$, then $t'$ must be exactly $t - 1$, which concludes
the inductive step for that case.

So, assume that $\tau_{v} > \tau_{u}$ and suppose by contradiction that $t'
\geq t + 1$.
Using the same line of arguments as in the previous paragraph, we conclude
that at time $\tau_{u}$, port $\psi_{u}(v)$ stores the message $M_{v}(t - 1)$.
Node $u$ cannot complete the pausing feature of $\phi_{u}(t + 1)$ while
$M_{v}(t - 1)$ is still stored in $\psi_{u}(v)$, hence $v$ must have transmitted
$M_{v}(t)$ before $u$ completed the pausing feature of $\phi_{u}(t + 1)$.
But this means that $v$ completed the pausing feature of $\phi_{v}(t)$
before $u$ could have transmitted $M_{u}(t + 1)$, in contradiction to the
assumption that $\psi_{v}(u)$ stores $M_{u}(t')$ for some $t' \geq t + 1$ at
time $\tau_{v}$.
The assertion follows.
\end{proof}

Consider two adjacent nodes $u, v \in V$.
If node $u$ is at round $t - 1$ when an adjacent node $v$ is at round
$t + 1$, then $v$ completed the pausing feature of $\phi_{v}(t)$ before $u$
transmitted $M_{u}(t - 1)$, in contradiction to
Lemma~\ref{lemma:PausingFeature}.
Therefore, our synchronizer satisfies synchronization property (S1).
Furthermore, a similar argument shows that between the time $v$ completed the
pausing feature of $\phi_{v}(t)$ and the time $v$ completed the simulation
phase $\phi_{v}(t)$ itself, the content of $\psi_{v}(u)$ may change from
$M_{u}(t - 1)$ to $M_{u}(t)$ (if it was not already $M_{u}(t)$), but it will
not store $M_{u}(t')$ for any $t' > t$.
This fact is crucial for the implementation of the simulation feature.

\paragraph{The Simulation Feature.}
Upon completion of the pausing feature $P_{q} \times \{j\}$, $v$ moves on to
the simulation feature $S_{q} \times \{j\}$.
The purpose of this feature is to perform the actual simulation of round $t$
in $v$, namely, to determine the state (of $Q$) dominating the simulation phase
of the next round and the message transmitted when moving from the simulation
phase of the current round to that of the next round.

To see how this works out, suppose that $\lambda(q) = \sigma \in \Sigma$.
We would have wanted node $v$ to count (up to the bounding parameter $b$) the
number of occurrences of $\widehat{\Sigma}$-letters in its ports that
correspond to the transmission of $\sigma$ at round $t - 1$ under $\Pi$, that
is, the number of occurrences of letters in $\Gamma_{t - 1} \cup \Gamma_{t}$,
where
$$
\Gamma_{t - 1} ~ = ~
\left\{
\left( \sigma', \sigma, j - 1 \right)
\mid \sigma' \in \Sigma \cup \{\varepsilon\}
\right\}
\qquad \text{and} \qquad
\Gamma_{t} ~ = ~
\left\{
\left( \sigma, \sigma', j \right)
\mid \sigma' \in \Sigma \cup \{\varepsilon\}
\right\} \ .
$$
More formally, the application of the transition function $\widehat{\delta}$
at the end of the simulation phase $\phi_{v}(t)$ should be based on
$f_{b}(\sum_{\gamma \in \Gamma_{t - 1} \cup \Gamma_{t}} \Number(\gamma))$, where
$\Number(\gamma)$ stands for the number of occurrences of the letter $\gamma$
in the ports of $v$ at the end of $\phi_{v}(t)$.

Identifying the integer $b$ with the symbol $\BoundPar$, we observe that the
function $f_{b} : \Integers_{\geq 0} \rightarrow B$ satisfies
$$
f_{b}(x + y)
~ = ~
\min \left\{
f_{b}(x) + f_{b}(y), b
\right\}
$$
for every $x, y \in \Integers_{\geq 0}$.
A natural attempt to compute $f_{b}(\sum_{\gamma \in \Gamma_{t - 1} \cup \Gamma_{t}}
\Number(\gamma))$ would include in the feature $S_{q} \times \{j\}$ a state
$s_{\gamma, i}$ for every letter $\gamma \in \Gamma_{t - 1} \cup \Gamma_{t}$
and integer $i \in \{0, \dots, b\}$;
the query letter of $s_{\gamma, i}$ would be $\widehat{\lambda}(s_{\gamma}) =
\gamma$ and the transition function $\widehat{\delta}$ would be designed so
that $v$ moves from $s_{\gamma, i}$ to $s_{\gamma', i'}$, where $\gamma'$
follows $\gamma$ in some fixed order of the letters in $\Gamma_{t - 1} \cup
\Gamma_{t}$ and $i' = \min\{ i + f_{b}(\Number(\gamma)), b \}$.

However, care must be taken with this approach since $\Number(\gamma)$ may
decrease (respectively, increase) during $\phi_{v}(t)$ for $\gamma \in
\Gamma_{t - 1}$ (resp., for $\gamma \in \Gamma_{t}$) due to new incoming
messages.
To avoid this obstacle, we design the feature $S_{q} \times \{j\}$ so that 
first, it computes
$\varphi_1 \leftarrow f_{b}(\sum_{\gamma \in \Gamma_{t - 1}} \Number(\gamma))$;
next, it computes $\varphi_2 \leftarrow f_{b}(\sum_{\gamma \in \Gamma_{t}}
\Number(\gamma))$; and
finally, it computes ``again'' $\varphi_3 \leftarrow f_{b}(\sum_{\gamma \in
\Gamma_{t - 1}} \Number(\gamma))$.
If $\phi_1 = \phi_3$, then the current simulation phase is over and
$\widehat{\delta}$ is applied, simulating $\delta(q, f_{b}(\phi_1 + \phi_2))$;
otherwise, the feature $S_{q} \times \{j\}$ is invoked from scratch.
Since the value of $f_{b}(\sum_{\gamma \in \Gamma_{t - 1}} \Number(\gamma))$ cannot
increase during the simulation phase, and since $\phi_1 \leq b$, the feature
$S_{q} \times \{j\}$ is invoked at most $b$ times throughout the execution of
the simulation phase.
By induction on $t$, we conclude that our synchronizer satisfies
synchronization property (S2), which concludes the correctness proof of the
simulation.

\paragraph{Accounting.}
It remains to show that all ingredients of protocol $\widehat{\Pi}$ are of
constant size and that the run-time of protocol $\widehat{\Pi}$ incurs at most
a constant multiplicative overhead on top of that of protocol $\Pi$.
The former claim is established by following our synchronizer construction,
observing that $|\widehat{\Sigma}| = O \left( |\Sigma|^2 \right)$ and
$|\widehat{Q}| = O \left( |Q| \cdot (|\Sigma|^2 + |\Sigma| \cdot b) \right)$
(recall that the bounding parameter $b$ remains unchanged).
For the latter claim, we need the following definition:
given some node subset $U \in V$ and round $t \in \Integers_{> 0}$, let
$\tau(U, t)$ denote the first time at which $u$ completed simulation phase
$\phi_{u}(t)$ for all nodes $u \in U$.
The following proposition can now be established.

\begin{proposition} \label{proposition:SynchronizerOverhead}
For every node $v \in V$ and round $t \in \Integers_{> 0}$, the time difference
$\tau(\{v\}, t + 1) - \tau(\Neighbors(v) \cup \{v\}, t)$ is (up)bounded by a
constant.
\end{proposition}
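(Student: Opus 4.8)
The plan is to fix $v \in V$ and $t \in \Integers_{> 0}$, abbreviate $\tau = \tau(\Neighbors(v) \cup \{v\}, t)$, and show that node $v$ completes the simulation phase $\phi_v(t + 1)$ at most $O(1)$ time units after $\tau$; since $\tau(\{v\}, t + 1)$ is precisely the time at which $v$ completes $\phi_v(t + 1)$, this bounds $\tau(\{v\}, t + 1) - \tau$ as required. If $v$ has already completed $\phi_v(t + 1)$ by time $\tau$, the difference is non-positive and there is nothing to prove; so from now on we assume that at time $\tau$ node $v$ has started $\phi_v(t + 1)$ (which it has, since it completed $\phi_v(t)$ no later than $\tau$) but has not yet completed it, so that $v$ is in round $t + 1$ at time $\tau$. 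The bound then rests on three ingredients: (i) each step of $\widehat{\Pi}$ lasts at most one time unit by definition; (ii) apart from steps spent self-looping inside the pausing feature, node $v$ performs only a constant number of steps during all of $\phi_v(t + 1)$; and (iii) after time $\tau$, node $v$ spends at most $O(1)$ time units self-looping inside the pausing feature of $\phi_v(t + 1)$.

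The core of the argument is a clean-port claim: from time $\tau + 1$ onward, and until $v$ completes $\phi_v(t + 1)$, no port of $v$ holds a letter that is dirty with respect to $\phi_v(t + 1)$. To establish it, observe first that every $u \in \Neighbors(v)$ has completed $\phi_u(t)$ by time $\tau$ and hence has transmitted $M_u(t)$ by then; since delivery delays are at most one time unit and the FIFO property holds, port $\psi_v(u)$ stores $M_u(t)$ or $M_u(t + 1)$ by time $\tau + 1$, and by Lemma~\ref{lemma:PausingFeature} together with the remark following it, at time $\tau$ that port already stored one of $M_u(t - 1)$, $M_u(t)$, $M_u(t + 1)$. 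The only dirty possibility among these is $M_u(t - 1)$, and it is overwritten by $M_u(t)$ no later than $\tau + 1$ and, by FIFO, never re-delivered afterwards. The only other dirty letters $\psi_v(u)$ could ever hold are messages $M_u(t')$ with $t' \geq t + 2$, and transmitting such a message requires $u$ to reach round $t + 3$, which synchronization property (S1) --- already established for our synchronizer --- forbids while $v$ is in round $t + 1$. This proves the claim.

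With the claim in hand, the two features are straightforward to bound. The pausing feature of $\phi_v(t + 1)$ self-loops only while some port of $v$ holds its query letter, which is always a dirty letter; hence it stops self-looping within $O(1)$ time units after $\tau$ (any step that begins once the ports are clean fails to read the query letter and advances), after which $v$ walks through its constantly many states in $O(1)$ steps. The simulation feature of $\phi_v(t + 1)$ performs no self-loops at all, each of its invocations consists of $O(1)$ steps (a constant number of sweeps over the relevant constant-size sets of $\widehat{\Sigma}$-letters), and the monotonicity argument of the correctness proof bounds the total number of invocations over the whole phase by $b$; so $v$ performs only $O(b) = O(1)$ additional steps before applying $\widehat{\delta}$ and completing $\phi_v(t + 1)$. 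Charging one time unit per step and adding the $O(1)$ time units of (iii) yields the $O(1)$ time-unit bound, and hence the proposition.

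I expect the clean-port claim to be the main obstacle: it is the only point at which one must combine the bounded-delay and FIFO assumptions, the inductive guarantee of Lemma~\ref{lemma:PausingFeature}, and property (S1) in order to rule out simultaneously a stale dirty message lingering in a port and a too-fresh dirty message --- one two rounds ahead --- being delivered during $\phi_v(t + 1)$. Once it is in place, the remainder is just counting states and invocations, which the correctness proof has essentially already done.
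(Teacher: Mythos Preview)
Your proof is correct and follows essentially the same approach as the paper's: both argue that by time $\tau + 1$ every port $\psi_v(u)$ has received $M_u(t)$ (hence is clean), and then bound the remaining work in the pausing and simulation features by a constant number of steps, each lasting at most one time unit. Your version is considerably more detailed---in particular, your clean-port claim explicitly rules out the arrival of a dirty $M_u(t')$ with $t' \geq t+2$ via property~(S1), a point the paper's three-line proof leaves implicit---but the skeleton and the key observations are the same.
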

\begin{proof}
Since each transmitted message has a delay of at most $1$ unit of time, it
follows that by time $\tau(\Neighbors(v) \cup \{v\}, t) + 1$, message
$M_{u}(t)$ must reach $\psi_{v}(u)$ for all $u \in \Neighbors(v)$.
The pausing and simulation features of $\phi_{v}(t + 1)$ are then completed
within $O (|\Sigma|^2)$ and $O (|\Sigma| \cdot b)$ steps, respectively.
The assertion follows as each step lasts for at most $1$ unit of time.
\end{proof}

Employing Proposition~\ref{proposition:SynchronizerOverhead}, we conclude by
induction on $t$ that $\tau(V, t) = O (t)$ for every $t \in \Integers_{> 0}$,
hence if the execution of protocol $\Pi$ requires $T$ rounds, then the
execution of protocol $\widehat{\Pi}$ is completed within $O (T)$ time units.
Theorem~\ref{theorem:Synchronizer} follows.
} 
\LongVersion 
\DetailsSynchronizer{}
\LongVersionEnd 

\LongVersion 
\subsection{Multiple-Letter Queries}
\label{section:MultiLetter}
\LongVersionEnd 
Recall that according to the model presented in \Section{}~\ref{section:Model},
each state $q \in Q$ is associated with a query letter $\lambda(q)$ and the
application of the transition function when node $v$ resides in state $q$ is
determined by $f_{b}(\Number(\sigma))$, where $\Number(\sigma)$ is the number of
occurrences of the letter $\sigma$ in the ports of $v$.
From the perspective of the protocol designer, it is often more convenient to
assume that the node queries on all letters simultaneously, namely, that
the application of the transition function is determined by the vector
$\left\langle f_{b}(\Number(\sigma)) \right\rangle_{\sigma \in \Sigma}$.

Now that we may assume a synchronous environment, this stronger
multiple-letter queries assumption can easily be supported.
Indeed, at the cost of increasing the number of states and the run-time by
constant factors, one can subdivide each round into $|\Sigma|$ subrounds,
dedicating each subround to a different letter in $\Sigma$, so that at the end
of the round, the state of $v$ reflects $f_{b}(\Number(\sigma))$ for every
$\sigma \in \Sigma$.

\begin{theorem} \label{theorem:MultiLetter}
Every nFSM protocol with multiple-letter queries can be simulated by an nFSM
protocol with single-letter queries at the cost of a constant multiplicative
run-time overhead.
\end{theorem}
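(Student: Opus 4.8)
The plan is to build the single-letter protocol $\Pi' = \langle Q', Q_I', Q_O', \Sigma, \sigma_0, b, \lambda', \delta' \rangle$ that simulates a multiple-letter protocol $\Pi = \langle Q, Q_I, Q_O, \Sigma, \sigma_0, b, \Lambda, \delta \rangle$, where now the transition function of $\Pi$ takes as input a state $q \in Q$ together with a whole vector $\bar{v} = \langle f_b(\Number(\sigma)) \rangle_{\sigma \in \Sigma} \in B^{|\Sigma|}$. Since we may already assume a locally synchronous environment (by Theorem~\ref{theorem:Synchronizer}), it suffices to build $\Pi'$ as a locally synchronous protocol; a final application of Theorem~\ref{theorem:Synchronizer} converts it back to asynchrony at constant multiplicative overhead. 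The core idea is that each round $t$ of $\Pi$ is simulated by a block of $|\Sigma|$ consecutive rounds of $\Pi'$, one \emph{subround} per letter $\sigma \in \Sigma$; during the subround dedicated to $\sigma$, node $v$ sets $\lambda'$ to query $\sigma$, records $f_b(\Number(\sigma))$ into its state, and transmits $\varepsilon$; after all $|\Sigma|$ subrounds $v$ holds the full vector $\bar{v}$ in its (finite) state and can apply $\delta(q, \bar{v})$, transmitting the resulting $\Pi$-message at the last subround of the block.

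Concretely, I would take the state set of $\Pi'$ to be (roughly) $Q' = Q \times \{0, 1, \dots, |\Sigma|-1\} \times B^{|\Sigma|}$, where the middle coordinate names which subround we are in and the last coordinate accumulates the partial query vector (entries not yet filled set to some default, say $0$). The query-letter map $\lambda'$ sends $(q, i, \bar{w})$ to the $i$-th letter $\sigma_i$ of $\Sigma$ under a fixed ordering. The transition function $\delta'$ on input $((q, i, \bar{w}), x)$ with $x = f_b(\Number(\sigma_i))$: if $i < |\Sigma|-1$, move deterministically to $(q, i+1, \bar{w}[\sigma_i \mapsto x])$ transmitting $\varepsilon$; if $i = |\Sigma|-1$, let $\bar{v} = \bar{w}[\sigma_{i} \mapsto x]$ be the completed vector and, mirroring the randomized choice of $\Pi$, pick $(q'', \sigma'') \in \delta(q, \bar{v})$ uniformly at random, moving to $(q'', 0, \bar{0})$ and transmitting $\sigma''$. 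The initial state is $(q_I, 0, \bar{0})$ and output states are those $(q, 0, \bar{0})$ with $q \in Q_O$.

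The key correctness point I would have to nail down is that the port contents of $v$ are stable across the $|\Sigma|$ subrounds of a block, so that each $f_b(\Number(\sigma_i))$ read in subround $i$ reflects the same snapshot of neighbors' last $\Pi$-messages — this is exactly the synchrony we need. Because $\Pi'$ is locally synchronous and every node transmits $\varepsilon$ in all but the last subround of its block, a neighbor $u$ of $v$ overwrites $\psi_v(u)$ only when $u$ completes its own block; invoking synchronization property (S1) for $\Pi'$, when $v$ is in any subround of its block for round $t$, each neighbor $u$ is within one round of $v$, so $u$ is somewhere inside its own block for round $t-1$, $t$, or $t+1$ — but $u$ only transmits a non-$\varepsilon$ message at the very end of a block, so throughout $v$'s block the port $\psi_v(u)$ holds $u$'s round-$(t-1)$ $\Pi$-message (or round-$t$ once $u$ finishes), and an inductive argument as in Lemma~\ref{lemma:PausingFeature} shows the vector $\bar v$ that $v$ assembles equals $\langle f_b(\Number(\sigma)) \rangle$ on $u$'s round-$(t-1)$ messages — i.e. precisely the input $\Pi$ would use. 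Formally this is an induction on $t$ establishing that $\Pi'$ satisfies (S1)–(S2) relative to the block structure and that the distribution over configurations after block $t$ matches that of $\Pi$ after round $t$. The main obstacle is precisely this bookkeeping of port stability within a block — the rest (constant state-set size $|Q'| = O(|Q| \cdot |\Sigma| \cdot (b+1)^{|\Sigma|})$, since $|\Sigma|$ and $b$ are universal constants, and the constant-factor run-time blowup since each round becomes $|\Sigma| = O(1)$ rounds) is routine counting.
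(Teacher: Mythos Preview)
Your proposal is correct and follows exactly the paper's approach: subdivide each simulated round into $|\Sigma|$ subrounds, one per letter, accumulating the per-letter counts in the (constantly enlarged) state before applying $\delta$. The paper's own argument is only a one-paragraph sketch, and your more detailed treatment is sound; note, though, that the port-stability concern you flag is resolved directly by property~(S2) --- since every neighbor transmits $\varepsilon$ in all but the last subround of its block, (S2) forces $\psi_v(u)$ to hold $u$'s block-$(t-1)$ message throughout $v$'s block $t$ --- so the inductive argument you allude to is unnecessary.
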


\section{Maximal Independent Set}
\label{section:MIS}
Given a graph $G = (V, E)$, the \emph{maximal independent set (MIS)} problem
asks for a node subset $U \subseteq V$ which is independent in the sense
that $(U \times U) \cap E = \emptyset$, and maximal in the sense that $U'
\subseteq V$ is not independent for every $U' \supset U$.
Distributed MIS algorithms with logarithmic run-time operating in the
message passing model were presented by Luby~\cite{Luby86} and independently,
by Alon et al.~\cite{ABI86};\footnote{
The focus of \cite{Luby86} and \cite{ABI86} was actually on the PRAM model,
but their algorithms can be adapted to the message passing model.
}
Luby's algorithm has since become a specimen of distributed algorithms;
in the last 25 years, researchers have tried to improve it, if only e.g., with
an improved bit complexity~\cite{MRSZ11}, on special graph
classes~\cite{SW10,LW11}, or in a weaker communication
model~\cite{AfekAlonBarJoseph+11}.
An $\Omega (\sqrt{\log n})$-lower bound on the run-time of any distributed
MIS algorithm operating in the message passing model was established by Kuhn
et al.~\cite{KMW04}.
Our goal in this section is to design an nFSM protocol for the MIS problem
with run-time $O (\log^2 n)$.

\paragraph{Outline of the Key Technical Ideas.}
Our protocol is inspired by the existing message passing MIS algorithms.
Common to all these algorithms is that they are based on the concept of
grouping consecutive rounds into \emph{phases}, where in each phase, nodes
compete against their neighbors over the right to join the MIS.
Existing implementations of such competitions require at least one of the
following three capabilities:
(1) performing calculations that involve super-constant numbers;
(2) communicating with each neighbor independently; or
(3) sending messages of super-constant size, specifically, of size $c \log n$
for some constant $c > 0$.
The first two capabilities are clearly out of the question for an nFSM
protocol.
The third one is also not supported by the nFSM model, but perhaps one can
divide a message with a logarithmic number of bits over logarithmic many
rounds, sending $1$ (or $O (1))$ bits per round (cf. Algorithm~B in
\cite{MRSZ11})?

This naive attempt results in phases of length $c \log n$.
However, no FSM can count the rounds in a $c \log n$ long phase --- a task
essential for deciding if the current phase is over and the next one
should begin.
Furthermore, to guarantee fair competition, the phases must be aligned
across the network, thus ruling out the possibility to start node $v$'s phase
$i$ before phase $i - 1$ of some node $u \in \Neighbors(v)$ is finished.
In fact, an efficient algorithm that requires $\omega (1)$ long aligned
phases cannot be implemented under the nFSM model.
So, how can we decide if node $v$ joins the MIS using constant size messages
without the ability to maintain long aligned phases?

This issue is resolved by relaxing the requirements that the phases are
aligned and of a predetermined length, introducing a feature referred to
as a \emph{tournament}.
Our tournaments are only ``softly'' aligned and their lengths are
determined probabilistically, in a manner that can be maintained under the
nFSM model.
Nevertheless, they enable a fair competition between neighboring nodes, as
desired.

\LongVersion 
\begin{AvoidOverfullParagraph}
\LongVersionEnd 
\paragraph{The Protocol.}
Employing Theorems \ref{theorem:Synchronizer} and \ref{theorem:MultiLetter},
we assume a locally synchronous environment and use multiple-letter queries.
The state set of the protocol is
$Q = \{ \Win, \Lose, \Down_{1}, \Down_{2}, \Up_{0}, \Up_{1}, \Up_{2} \}$,
with $Q_{I} = \{ \Down_{1} \}$ (the initial state of all nodes) and $Q_{O} = \{
\Win, \Lose \}$, where $\Win$ (respectively, $\Lose$) indicates membership
(resp., non-membership) in the MIS output by the protocol.
The states in $Q_{A} = Q - Q_{O}$ are called the \emph{active} states and a
node in an active state is referred to as an \emph{active} node.
We take the communication alphabet $\Sigma$ to be identical to the state set
$Q$, where the letter transmissions are designed so that node $v$ transmits
letter $q$ whenever it moves to state $q$ from some state $q' \neq q$;
no letter is transmitted in a round at which $v$ remains in the same state.
Letter $\Down_{1}$ is the initial letter stored in all ports at the beginning
of the execution.
The bounding parameter is set to $b = 1$.
\LongVersion 
\end{AvoidOverfullParagraph}
\LongVersionEnd 

A schematic description of the transition function is provided in
\Figure{}~\ref{figure:MisFsm};
its logic is as follows.
Each state $q \in Q_{A}$ has a subset $D(q) \subseteq Q_{A}$ of \emph{delaying
states}:
node $v$ remains in the current state $q$ as long as (at least) one of its
neighbors is in some state in $D(q)$.
This is implemented by querying on the letters (corresponding to the states)
in $D(q)$, staying in state $q$ as long as at least one of these letters is
found in the ports.
Specifically, state $\Down_{1}$ is delayed by state $\Down_{2}$, which is
delayed by all three $\Up$ states.
State $\Up_{j}$, $j = 0, 1, 2$, is delayed by state $\Up_{j - 1 \bmod 3}$,
where state $\Up_{0}$ is also delayed by state $\Down_{1}$.

\tikzstyle{state with output}=[ellipse split,draw=black, minimum width=12mm,
minimum height=11mm, inner sep=0.4mm]

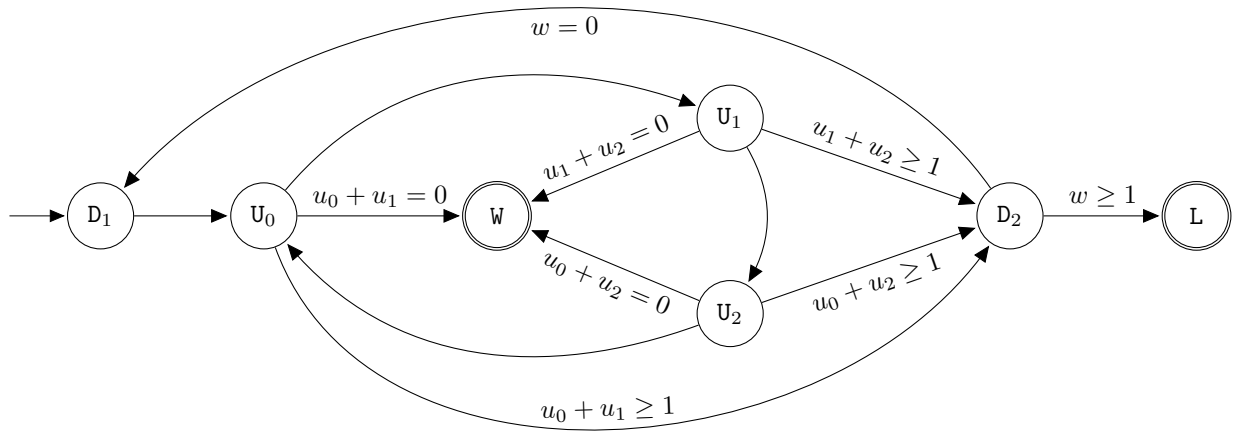
\begin{figure}[!t]
\begin{center}
\begin{small}
\begin{tikzpicture}[shorten >=1pt, auto, initial text=, x=3.1cm, y=1.3cm,
>=triangle 45, initial/.style={pin={[pin distance=8mm, pin edge={<-,shorten
<=1pt, black}]left:}}]

\node at (0.3,1.5) [state, initial] (down1) {$\DownShort_{1}$};

\node at (1,1.5) [state] (up0) {$\UpShort_{0}$};

\node at (3,2.5) [state] (up1) {$\UpShort_{1}$};

\node at (3,0.5) [state] (up2) {$\UpShort_{2}$};

\node at (4.2,1.5) [state] (down2) {$\DownShort_{2}$};

\node at (5,1.5) [state, accepting] (lose) {$\LoseShort$};

\node at (2,1.5) [state, accepting] (win) {$\WinShort$};

\path[->]
(up0) edge node [sloped, midway, above] {$u_0+u_1=0$} (win)

(up0) edge [out=290, in=235, looseness=1.0] node [sloped, midway, above]
{$u_0+u_1\geq1$} (down2)

(up0) edge [out=50, in=160] node [sloped, midway, above] {} (up1)

(up1) edge node [sloped, midway, above]{$u_1+u_2=0$} (win)

(up1) edge node [sloped, midway, above] {$u_1+u_2\geq 1$} (down2)

(up1) edge [bend left] node [sloped, midway, above] {} (up2)

(up2) edge node [sloped, midway, below] {$u_0+u_2=0$} (win)

(up2) edge node [sloped, midway, below] {$u_0+u_2\geq 1$} (down2)

(up2) edge [out=200, in=-50] node [sloped, midway, below] {} (up0)

(down1) edge node [sloped, midway, above] {} (up0)

(down2) edge [out=125, in=50, looseness=0.9] node [sloped, midway, below]
{$w=0$} (down1)

(down2) edge node [sloped, midway, above] {${w\geq 1}$} (lose)
;
\end{tikzpicture}
\end{small}
\end{center}
\caption{\label{figure:MisFsm}
The transition function of the MIS protocol with state names abbreviated by
their first (capital) letters.
The node stays in state $q$ (a.k.a. \emph{delayed}) as long as $\Number(q') >
0$ for any state $q'$ such that a $q' \rightarrow q$ transition is defined
(for clarity, this is omitted from the figure).
Assuming that the node is not delayed, each transition specified in the figure
is associated with a condition on the number of appearances of the query
letters in the ports (depicted by the corresponding lower-case letter) so that
the transition is followed only if the condition is satisfied (an empty
condition is satisfied by all port configurations);
if some port configuration satisfies several transition conditions, then one
of them is chosen uniformly at random.
}
\end{figure}

States $\Win$ and $\Lose$ are sinks in the sense that a node that moves
to one of these states will stay there indefinitely.
Assuming that node $v$ does not find any delaying letter in its ports, the
logic of the $\Up$ and $\Down$ states is as follows.
From state $\Down_{1}$, $v$ moves to state $\Up_{0}$.
From state $\Down_{2}$, $v$ moves to state $\Down_{1}$ if $\Number(\Win) =
0$, that is, if it does not find any $\Win$ letter in its ports;
otherwise, it moves to state $\Lose$.
When in state $\Up_{j}$, $v$ tosses a fair coin and proceeds as follows:
if the coin turns head, then $v$ moves to state $\Up_{j + 1 \bmod 3}$;
if the coin turns tail, then $v$ moves to state $\Win$ if $\Number(\Up_{j}) =
\Number(\Up_{j + 1 \bmod 3}) = 0$;
and to state $\Down_{2}$ otherwise.
This completes the description of our nFSM protocol for the MIS problem.

\paragraph{Turns and Tournaments.}
Our protocol is designed so that an active node $v$ traverses the $\Down$ and
$\Up$ states in a (double-)circular fashion:
an inner loop of the $\Up$ states (moving from state $\Up_{j}$ to state
$\Up_{j + 1 \bmod 3}$) nested within an outer loop consisting of the $\Down$
states and the inner loop.
Of course, $v$ may spend more than one round at each state $q \in Q_{A}$
(delayed by adjacent nodes in states $D(q)$);
we refer to a maximal contiguous sequence of rounds that $v$ spends in the same
state $q \in Q_{A}$ as a \emph{$q$-turn}, or simply as a
\emph{turn} if the actual state $q$ is irrelevant.
A maximal contiguous sequence of turns that starts at a $\Down_{1}$-turn and
does not include any other $\Down_{1}$-turn (i.e., a single iteration of the
outer loop) is referred to as a \emph{tournament}.
We index the tournaments and the turns within a tournament by the positive
integers.
Note that by definition, every tournament $i$ of $v$ starts with a
$\Down_{1}$-turn, followed by a non-empty sequence of $\Up$-turns.
If tournament $i + 1$ of $v$ exists, then tournament $i$ ends with a
$\Down_{2}$-turn;
otherwise, it ends with an $\Up$-turn.
The following observation is established by induction on the rounds.
\begin{observation} \label{observation:NeighborsTurns}
Consider some node $v \in V$ in turn $j \in \Integers_{> 0}$ of tournament $i
\in \Integers_{> 0}$ and some active node $u \in \Neighbors(v)$.
\begin{DenseItemize}

\item
If this is a $\Down_{1}$-turn of $v$ ($j = 1$), then $u$ is in either
(A) the last ($\Down_{2}$-)turn of tournament $i - 1$;
(B) turn $1$ of tournament $i$; or
(C) turn $2$ of tournament $i$.

\item
If this is an $\Up$-turn of $v$ ($j \geq 2$), then $u$ is in either
(A) turn $j - 1$ of tournament $i$;
(B) turn $j$ of tournament $i$;
(C) turn $j + 1$ of tournament $i$; or
(D) the last ($\Down_{2}$-)turn $j' \leq j + 1$ of tournament $i$.

\item
If this is a $\Down_{2}$-turn of $v$ (the last turn of this tournament),
then $u$ is in either
(A) an $\Up$-turn $j' \geq j - 1$ of tournament $i$;
(B) the last ($\Down_{2}$-)turn of tournament $i$; or
(C) turn $1$ of tournament $i + 1$.

\end{DenseItemize}
\end{observation}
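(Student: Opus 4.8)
The plan is to prove all three bullets simultaneously by a single induction on the global rounds, exploiting the local synchrony guaranteed by properties (S1) and (S2): when $v$ is in round $r$, every neighbor $u$ is in round $r-1$, $r$, or $r+1$, and the port $\psi_v(u)$ that $v$ reads at round $r$ holds the message $u$ transmitted at its round $r-1$ (equivalently, the most recent state $u$ entered on or before its round $r-1$). First I would set up the invariant I actually want to maintain, which is slightly stronger than the three-case statement: namely, that for every edge $\{u,v\}$ and every round, the pair of (tournament,turn) indices of $u$ and of $v$ differ by a bounded amount in the natural cyclic order on the ``turn track'' $\Down_1 \to \Up_0 \to \Up_1 \to \Up_2 \to \cdots \to \Up_{k} \to \Down_2 \to \Down_1 \to \cdots$; concretely that when $v$ is in turn $j$ of tournament $i$, $u$'s position lies within one ``turn slot'' on either side, with the asymmetries recorded by the case analysis. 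I would phrase the induction as: if the invariant holds at the end of round $r$ for all edges, then it holds at the end of round $r+1$; the base case is round $1$, where all nodes are in state $\Down_1$ (turn $1$ of tournament $1$) and all ports hold $\Down_1$.

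The core of the argument is the inductive step, and here the key mechanism is the \emph{delaying} structure encoded in $D(q)$. The crucial qualitative facts I would extract and use are: (i) $\Down_1$ is delayed by $\Down_2$, so $v$ cannot leave its $\Down_1$-turn of tournament $i$ while any neighbor still has $\Down_2$ (the last turn of tournament $i-1$) stored in its port, which by (S1)+(S2) means no neighbor is ``behind'' by a full tournament; (ii) $\Down_2$ is delayed by all three $\Up$ states, so $v$ cannot leave its $\Down_2$-turn while any neighbor is still in some $\Up$-turn of the current tournament — this is what forces neighbors to catch up before $v$ starts the next tournament; and (iii) $\Up_j$ is delayed by $\Up_{j-1 \bmod 3}$, with $\Up_0$ additionally delayed by $\Down_1$, which gives the ``one step apart'' spacing between neighbors inside the $\Up$-loop. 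For each of the three bullets I would check: starting from the inductive hypothesis on $v$'s and $u$'s positions, and using the delaying rule to rule out the forbidden transitions for whichever of $u,v$ is currently ``ahead,'' the only positions $u$ can occupy relative to $v$'s new position are exactly those listed. The $\bmod 3$ indexing of the $\Up$ states is essential here: it is precisely what lets an FSM with a constant number of states detect, via the delaying query, whether a neighbor is exactly one $\Up$-turn behind versus a full cycle behind, so I would be careful that the argument for bullet~2 never confuses $\Up_{j+1}$ of tournament $i$ with $\Up_{j+1}$ of tournament $i+1$ — the intermediate $\Down_1,\Down_2$ turns and their delaying relations preclude a neighbor from lapping $v$.

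The main obstacle I anticipate is the bookkeeping around \emph{when} exactly ports get updated relative to when a node reads them, i.e., making the ``$u$ is in round $r\pm1$'' slack from (S1) interact cleanly with the ``port holds round-$(r-1)$ message'' guarantee from (S2); a neighbor can be a full round ahead of $v$ yet $v$ still sees its older message, and I must make sure the case list absorbs exactly this one-round ambiguity and no more. A secondary subtlety is that a node may leave the active states entirely (to $\Win$ or $\Lose$), so the observation is explicitly conditioned on $u$ being active; I would note that once $u$ is a sink it no longer transmits delaying letters, so it cannot block $v$, and conversely the observation simply makes no claim about such $u$ — but I must verify the induction does not secretly rely on an inactive neighbor's position. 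Modulo these two points the step is a finite case check driven entirely by the transition diagram in Figure~\ref{figure:MisFsm}, so I would present it as a table of (hypothesis position of $u$) $\times$ (transition taken by $v$) entries and verify each lands in the claimed set.
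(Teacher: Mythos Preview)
Your proposal is correct and matches the paper's approach: the paper simply states that the observation ``is established by induction on the rounds'' without giving further detail, and your plan fleshes out exactly that induction using the delaying structure $D(q)$ together with the locally-synchronous guarantees (S1)--(S2). The invariant you describe, the base case, and the case analysis driven by the transition diagram are precisely what such an induction requires.
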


Given some $U \subseteq V$ and $i, j \in \Integers_{> 0}$, let $T_{U}(i, j)$
denote the first time at which every node $v \in U$ satisfies either
\LongVersion \\ \LongVersionEnd
(1) $v$ is inactive; \LongVersion \\ \LongVersionEnd
(2) $v$ is in tournament $i' > i$; \LongVersion \\ \LongVersionEnd
(3) $v$ is in the last ($\Down_{2}$-)turn of tournament $i$; or \LongVersion \\
\LongVersionEnd
(4) $v$ is in turn $j' \geq j$ of tournament $i$. \LongVersion \\
\LongVersionEnd
Employing Observation~\ref{observation:NeighborsTurns}, the delaying states
feature guarantees that
\begin{equation} \label{equation:TurnsProgressSingleNode}
T_{v}(i, j + 1)
~ \leq ~
T_{\Neighbors(v) \cup \{v\}}(i, j) + 1
\end{equation}
for every $v \in V$ and $i, j \in \Integers_{> 0}$.
Since $T_{U}(i, j) \leq T_{V}(i, j)$ for every $U \subseteq V$, we can apply
inequality~(\ref{equation:TurnsProgressSingleNode}) to each node $v \in V$,
concluding that
\[
T_{V}(i, j + 1)
~ \leq ~
T_{V}(i, j) + 1 \, ,
\]
which immediately implies that
\begin{equation} \label{equation:TurnsProgressAllNodes}
T_{V}(i, k + 1)
~ \leq ~
T_{V}(i, 1) + k \, .
\end{equation}

\paragraph{Geometric Random Variables.}
Consider some $v \in V$ and $i \in \Integers_{> 0}$.
Assuming that tournament $i$ of $v$ exists, let $X_{v}(i)$ denote its
\emph{length} in terms of number of turns.
For the sake of simplifying the analysis, if tournament $i$ is the last
tournament of $v$, then we actually take $X_{v}(i)$ to be its length plus $1$
(this is done in order to compensate for the missing $\Down_{2}$-turn in the
end of the tournament.)
The logic of the $\Up$ states implies that $X_{v}(i)$ is a random variable
that obeys distribution $\Geom(1 / 2) + 2$, namely, a fixed term of $2$ plus
the geometric distribution with parameter $1 / 2$, independently of
$X_{v'}(i')$ for any $v' \neq v$ and/or $i' \neq i$.
Since the maximum of $n$ independent $\Geom(1 / 2)$-random variables is $O
(\log n)$ with high probability,
inequality~\ref{equation:TurnsProgressAllNodes} yields the following
observation.

\begin{observation} \label{observation:TournamentsProgress}
For every $i \in \Integers_{> 0}$, $T_{V}(i, 1)$ is finite with probability
$1$ and
\[
T_{V}(i + 1, 1)
~ \leq ~
T_{V}(i, 1) + O (\log n)
\]
with high probability.
\end{observation}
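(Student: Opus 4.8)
The plan is to establish Observation~\ref{observation:TournamentsProgress} by combining the structural control over turn-progression from inequality~(\ref{equation:TurnsProgressAllNodes}) with a probabilistic tail bound on the lengths $X_v(i)$ of the tournaments. First I would argue finiteness of $T_V(i,1)$ with probability~$1$: by induction on $i$, assuming $T_V(i,1)$ is finite almost surely, every node $v$ enters tournament $i$ within finite time, and by the logic of the $\Up$ states its length $X_v(i) \sim \Geom(1/2) + 2$ is finite with probability~$1$; hence every node reaches either the last $\Down_2$-turn of tournament $i$ or tournament $i+1$ within finitely many of its own turns, and by inequality~(\ref{equation:TurnsProgressAllNodes}) the global time $T_V(i+1,1)$ is finite almost surely. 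The base case $i = 1$ is immediate since every node starts in state $\Down_1$, i.e., in turn~$1$ of tournament~$1$, at time~$0$, so $T_V(1,1) = 0$.

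For the quantitative bound, the key observation is that once the global clock reaches $T_V(i,1)$, every node is either inactive, already past tournament $i$, in its last $\Down_2$-turn of tournament $i$, or in turn~$1$ of tournament $i$ with at most $X_v(i) - 1$ further $\Up$-turns (plus one $\Down_2$-turn) to traverse before it satisfies one of the exit conditions defining $T_V(i+1,1)$. Let $k = \max_{v \in V} X_v(i)$. Then after at most $k$ additional turn-increments every node in $V$ has reached tournament $i+1$ (or an earlier exit condition), so $T_V(i+1,1) \leq T_V(i, X_v(i)+1) \leq T_V(i,k+1)$, and inequality~(\ref{equation:TurnsProgressAllNodes}) gives $T_V(i+1,1) \leq T_V(i,1) + k$. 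It remains to bound $k$: since the $X_v(i)$ are $n$ independent copies of $\Geom(1/2) + 2$, and the maximum of $n$ i.i.d.\ $\Geom(1/2)$ variables is $O(\log n)$ with high probability (a standard union-bound computation: $\Probability[\Geom(1/2) > c\log n] \leq n^{-c}$ for the right constant), we get $k = O(\log n)$ with high probability, which yields the claimed bound.

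The main subtlety — not so much an obstacle as a point requiring care — is that the exit conditions in the definition of $T_U(i,j)$ are phrased so that the per-node bound~(\ref{equation:TurnsProgressSingleNode}) and its consequence~(\ref{equation:TurnsProgressAllNodes}) are exactly the right tool: one must check, using Observation~\ref{observation:NeighborsTurns}, that a node which has reached turn~$1$ of tournament $i+1$ (equivalently, completed the $\Down_2$-turn of tournament $i$) will not be delayed back, so that "turn $j' \geq j$ of tournament $i$" for $j$ exceeding $X_v(i)$ is correctly interpreted as "tournament $i' > i$" and the telescoped inequality truly carries the node across the tournament boundary. I would also note that the high-probability statements should be read in the standard sense (probability $1 - n^{-\Omega(1)}$), and that conditioning on the almost-sure event that $T_V(i,1)$ is finite does not affect the independence of the $X_v(i)$, since the tournament lengths depend only on the nodes' own coin tosses and not on the adversary's scheduling.
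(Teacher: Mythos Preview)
Your proposal is correct and follows essentially the same approach as the paper, which dispatches the observation in a single sentence: ``Since the maximum of $n$ independent $\Geom(1/2)$-random variables is $O(\log n)$ with high probability, inequality~(\ref{equation:TurnsProgressAllNodes}) yields the following observation.'' You have simply unpacked this into a fuller argument---the induction for almost-sure finiteness, setting $k = \max_v X_v(i)$, invoking inequality~(\ref{equation:TurnsProgressAllNodes}), and the union-bound tail estimate---so there is nothing substantively different to compare.
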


Our protocol is designed so that node $v$ moves to an output state ($\Win$ or
$\Lose$) in the end of each tournament with positive probability.
Moreover, the logic of state $\Down_{2}$ guarantees that if node $v$ moves to
state $\Win$ in the end of tournament $i$, then all its active neighbors move
to state $\Lose$ in the end of their respective tournaments $i$.
By Observation~\ref{observation:TournamentsProgress}, we conclude that our
protocol reaches an output configuration with probability $1$ and that every
output configuration reflects an MIS.
It remains to bound the run-time of our protocol.

\paragraph{The Virtual Graph $G^{i}$.}
Let $V^{i}$ be the set of nodes for which tournament $i$ exists and let $G^{i}
= (V^{i}, E^{i})$ be the subgraph induced on $G$ by $V^{i}$, where $E^{i}
= E \cap (V^{i} \times V^{i})$.\footnote{
The notation $G^{i}$ used in this section should not be confused with the
$i^{\text{th}}$ power of $G$.
}
Given some node $v \in V^{i}$, let $\Neighbors^{i}(v) = \{ u \in V^{i} \mid
(u, v) \in E \}$ be the neighborhood of node $v$ in $G^{i}$ and let $d^{i}(v)
= |\Neighbors^{i}(v)|$ be its degree.
Note that the graph $G^{i}$ is virtual and defined solely for the sake of the
analysis;
in particular, we do not assume that there exists some time at which the graph
induced by any meaningful subset of the nodes (say, the nodes in tournament
$i$) agrees with $G^{i}$.
The key observation in this context is that conditioned on $G^{i}$, the random
variables $X_{v}(i)$, $v \in V^{i}$, are (still) independent and obey
distribution $\Geom(1 / 2) + 2$.
Moreover, the graph $G^{i + 1}$ is fully determined by the random
variables $X_{v}(i)$, $v \in V^{i}$.
Our analysis relies on the following lemma.

\begin{lemma} \label{lemma:ExponentialSizeDecrease}
There exist two constants $0 < p, c < 1$ such that $|E^{i + 1}| \leq c
|E^{i}|$ with probability at least $p$.
\end{lemma}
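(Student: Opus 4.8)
The statement asserts that in one tournament, a constant fraction of the edges of $G^i$ disappear (i.e., at least one endpoint becomes inactive, or more precisely leaves $V^{i+1}$) with constant probability. The natural approach is a second-moment / Turán-type argument in the spirit of Luby's analysis, exploiting the fact that conditioned on $G^i$, the turn counts $X_v(i)$ are i.i.d.\ $\Geom(1/2)+2$. The key combinatorial translation is: a node $v \in V^i$ \emph{wins} its tournament $i$ (moves to $\Win$) exactly when it survives to the top of some $\Up_j$-turn without any of its then-relevant neighbors being ``present'' with the delaying/colliding letters --- and by Observation~\ref{observation:NeighborsTurns} the neighbors that matter are at most one turn ahead or behind. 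So, roughly, $v$ wins if its tournament length strictly exceeds that of all its $G^i$-neighbors (with the off-by-one slack absorbed by the constants). When $v$ wins, every active neighbor is forced to $\Lose$, so \emph{all} edges incident to $v$ are removed from $G^{i+1}$. Hence it suffices to show that the expected number of edges with at least one ``winning'' endpoint is a constant fraction of $|E^i|$, and then convert the expectation bound into a constant-probability bound via a reverse Markov inequality (since the number of removed edges is bounded by $|E^i|$).

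\textbf{Main steps.} First I would set up the event $A_v$ = ``$v$ wins tournament $i$'' and lower-bound $\Pr[A_v \mid G^i]$ in terms of $d^i(v)$. Using independence of the $X_u(i)$'s and the memoryless property of the geometric distribution, $\Pr[X_v(i) > \max_{u \in \Neighbors^i(v)} X_u(i)]$ is at least $\frac{1}{d^i(v)+1}$ up to constant factors (the slack from the $\pm 1$ turn offsets in Observation~\ref{observation:NeighborsTurns} costs only a constant factor, because $\Geom(1/2)$ exceeds a fixed competitor by two extra turns with constant probability). Second, for an edge $(u,v) \in E^i$, the event ``$(u,v)$ is killed'' contains $A_u \cup A_v$; so the expected number of killed edges is at least $\sum_{(u,v)\in E^i} \Pr[A_u \cup A_v \mid G^i]$. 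Third --- and this is where Luby's trick is needed --- one bounds $\Pr[A_u \cup A_v]$ from below by $\Pr[A_u] + \Pr[A_v] - \Pr[A_u \cap A_v]$ and controls the intersection term, or, more cleanly, sums $\sum_v \Pr[A_v] \cdot d^i(v)$ and observes that an edge $(u,v)$ is counted for endpoint $v$ whenever $v$ wins, giving $\Expectation[\text{killed edges} \mid G^i] \ge \tfrac12 \sum_{v} d^i(v)\Pr[A_v \mid G^i] \ge \tfrac12 \sum_v \frac{\Omega(d^i(v))}{d^i(v)+1} = \Omega(|V^i_{>0}|)$ --- wait, that is a bound in terms of vertices, not edges, so one must instead argue directly that $\sum_{(u,v)\in E^i}\Pr[A_u \cup A_v] \ge \gamma |E^i|$ for a constant $\gamma$, which follows because for each edge at least one of its two endpoints has degree at least the other's, and that endpoint wins with probability $\Omega(1/(d^i(v)+1)) \ge \Omega(1/(\text{its degree}+1))$; summing $\sum_v \Pr[A_v] d^i(v) \cdot$ (an extra care factor for double counting) and relating it to $|E^i| = \tfrac12\sum_v d^i(v)$ via the inequality $\frac{d}{d+1}\ge \tfrac12$ for $d\ge 1$ closes the gap. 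Finally, letting $K$ denote the (random, given $G^i$) number of killed edges, we have $\Expectation[K \mid G^i] \ge \gamma|E^i|$ and $K \le |E^i|$ deterministically, so by the reverse Markov inequality $\Pr[K \ge (\gamma/2)|E^i| \mid G^i] \ge \gamma/2$; taking $c = 1 - \gamma/2$ and $p = \gamma/2$ and then removing the conditioning on $G^i$ yields the lemma.

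\textbf{Expected main obstacle.} The delicate point is Step three: turning the per-vertex win probabilities $\Omega(1/(d^i(v)+1))$ into a per-\emph{edge} lower bound $\Omega(|E^i|)$ on the number of killed edges without the argument collapsing (as my own scratch above shows, a careless summation produces a vertex count, which is too weak when the graph is dense). The resolution is Luby's orientation idea: orient each edge of $G^i$ toward its lower-degree endpoint (breaking ties arbitrarily); each vertex has out-degree at most its degree, and when a vertex $v$ wins it kills in particular all edges oriented \emph{into} it; since $v$ wins with probability $\Omega(1/(d^i(v)+1))$ and every edge points into a vertex whose degree is at least (one plus) its in-neighbor's, a summation over edges gives $\Expectation[K\mid G^i] = \Omega(|E^i|)$. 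A secondary, more mundane obstacle is carefully verifying, via Observation~\ref{observation:NeighborsTurns}, that the ``$v$ strictly wins the length contest against its $G^i$-neighbors'' event really does imply $v\to\Win$ in our protocol despite the asynchronous turn offsets and the $\Up_0/\Down_1$ interaction --- but the $+2$ additive slack built into $X_v(i)$ and the two-sided $\pm1$ turn slack make this a constant-factor loss, not a structural problem.
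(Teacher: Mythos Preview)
Your high-level plan---show $\mathbb{E}\bigl[|E^{i}| - |E^{i+1}| \mid G^{i}\bigr] \geq \gamma |E^{i}|$ for a constant $\gamma$, then apply Markov---matches the paper's. The gap is exactly where you flagged it, Step~3, and your proposed orientation fix does not close it. The event $A_{v} = \{X_{v}(i) > \max_{u \in N^{i}(v)} X_{u}(i)\}$ has probability $\Theta(1/(d^{i}(v)+1))$, and when $v$ wins its $d^{i}(v)$ incident edges vanish; but no matter how you orient the edges, the sum you obtain is at best $\sum_{v} d^{i}(v)\cdot \Pr[A_{v}] = \sum_{v}\Theta\!\bigl(\tfrac{d^{i}(v)}{d^{i}(v)+1}\bigr) = \Theta(|V^{i}_{>0}|)$, which can be $o(|E^{i}|)$. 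Concretely, on $K_{n}$ each vertex wins with probability $\Theta(1/n)$, so $\sum_{v}(n-1)\cdot\Theta(1/n) = \Theta(n)$ while $|E^{i}| = \Theta(n^{2})$. The orientation only decides which endpoint is credited for a given edge; it cannot inflate the total.

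The missing idea is to exploit that when a \emph{neighbor} $u$ of $v$ wins, $v$ moves to $\Lose$ and \emph{all} of $v$'s $d^{i}(v)$ edges are removed---so one should lower-bound $\Pr[v\text{ is removed}]$, not $\Pr[v\text{ wins}]$. To sum this over $u\in N^{i}(v)$ without inclusion--exclusion, the paper (following \cite{ABI86}) restricts attention to \emph{good} vertices $v$ (those with at least $d^{i}(v)/3$ neighbors of degree $\le d^{i}(v)$; such $v$ touch more than half of $E^{i}$) and defines, for each low-degree neighbor $u\in\widehat{N}^{i}(v)$, the event $A^{i}(u,v)=\bigl\{X_{u}(i)>\max_{w\in N^{i}(u)\cup\widehat{N}^{i}(v)\setminus\{u\}}X_{w}(i)\bigr\}$. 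Because $u$ must beat all of $\widehat{N}^{i}(v)$, these events are pairwise disjoint over $u$; because $|N^{i}(u)\cup\widehat{N}^{i}(v)|\le 2d^{i}(v)$, each has probability at least $\tfrac{2}{3}\cdot\tfrac{1}{2d^{i}(v)}$ (the $2/3$ handling the unique-maximum issue for i.i.d.\ geometrics). Summing over $\ge d^{i}(v)/3$ disjoint events gives $\Pr[v\notin V^{i+1}\mid v\text{ good}]\ge 1/9$, hence $\mathbb{E}[|E^{i+1}|]<\tfrac{35}{36}|E^{i}|$. In short, your $A_{v}$ asks $v$ to beat only $N^{i}(v)$; enlarging the competing set so that $u$ must beat $N^{i}(u)\cup\widehat{N}^{i}(v)$ is precisely what makes the events disjoint and lets the per-edge contributions telescope to a constant fraction of $|E^{i}|$.
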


We will soon turn to proving Lemma~\ref{lemma:ExponentialSizeDecrease}, but
first, let us explain why it suffices for the completion of our analysis.
Define the random variable $Y = \min\{ i \in \Integers_{> 0} : |E^{i}| = 0
\}$.
Lemma~\ref{lemma:ExponentialSizeDecrease} implies that $Y$ is stochastically
dominated by a random variable that obeys distribution $\NegativeBinomial(O
(\log n), 1 - p) + O (\log n)$, namely, a fixed term of $O (\log n)$ plus the
negative binomial distribution with parameters $O (\log n)$ and $1 - p$, hence
$Y = O (\log n)$ in expectation and with high probability.
Since the nodes in $V - V^{i}$ are all in an output state (and will remain in
that state), and since the logic of the $\Up$ states implies that a degree-$0$
node in $G^{i}$ will move to state $\Win$ in the end of tournament $i$ (with
probability $1$) and thus, will not be included in $V^{i + 1}$, we can employ
Observation~\ref{observation:TournamentsProgress} to conclude that the
run-time of our protocol is $O (\log^2 n)$.

The remainder of this section is dedicated to establishing
Lemma~\ref{lemma:ExponentialSizeDecrease}.
The proof technique we use for that purpose resembles (a hybrid of) the
techniques used in \cite{ABI86} and \cite{MRSZ11} for the analysis of their
MIS algorithms.
We say that node $v \in V^{i}$ is \emph{good} in $G^{i}$ if
\[
|\{ u \in \Neighbors^{i}(v) \mid d^{i}(u) \leq d^{i}(v) \}| \geq
d^{i}(v) / 3  \, ,
\]
i.e., if at least third of $v$'s neighbors in $G^{i}$ have degrees smaller or
equal to that of $v$.
The following lemma is established in \cite{ABI86}.

\begin{lemma}[\cite{ABI86}] \label{lemma:GoodEdges}
More than half of the edges in $E^{i}$ are incident on good nodes in $G^{i}$.
\end{lemma}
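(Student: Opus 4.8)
The plan is to prove Lemma~\ref{lemma:GoodEdges} by a degree-orientation and charging argument. First I would fix an arbitrary total order on $V^{i}$ (say, by node identifier) and orient every edge $\{u,v\} \in E^{i}$ from its lower-degree endpoint to its higher-degree endpoint, breaking ties in favor of the endpoint that comes later in the fixed order. Writing $d^{+}(v)$ and $d^{-}(v)$ for the out- and in-degree of $v$ under this orientation, so that $d^{+}(v) + d^{-}(v) = d^{i}(v)$, the key property is that every in-neighbor $u$ of $v$ satisfies $d^{i}(u) \leq d^{i}(v)$. Consequently, for a node $v$ that is \emph{not} good the definition of goodness gives $d^{-}(v) \leq |\{u \in \Neighbors^{i}(v) \mid d^{i}(u) \leq d^{i}(v)\}| < d^{i}(v)/3$, and hence $d^{+}(v) > 2 d^{i}(v)/3$. (Incidentally this also shows that a node with no neighbor of strictly larger degree is automatically good, so every bad node has degree at least $1$.)

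Next I would partition $E^{i}$ according to the status of the endpoints: let $e_{BB}$, $e_{BG}$, $e_{GG}$ be the number of edges with two bad endpoints, one bad and one good endpoint, and two good endpoints, respectively, so that $|E^{i}| = e_{BB} + e_{BG} + e_{GG}$. Summing the bound $d^{+}(v) > 2 d^{i}(v)/3$ over all bad $v$, and using $\sum_{v \text{ bad}} d^{i}(v) = 2 e_{BB} + e_{BG}$, yields $\sum_{v \text{ bad}} d^{+}(v) > \frac{2}{3}(2 e_{BB} + e_{BG})$. On the other hand, each edge counted by $e_{BB}$ is the out-edge of exactly one (necessarily bad) endpoint, and each edge counted by $e_{BG}$ is the out-edge of at most one bad endpoint, so $\sum_{v \text{ bad}} d^{+}(v) \leq e_{BB} + e_{BG}$. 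Combining the two estimates, $e_{BB} + e_{BG} > \frac{4}{3} e_{BB} + \frac{2}{3} e_{BG}$, which after rearranging gives $e_{BG} > e_{BB}$.

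Finally, the number of edges incident on at least one good node in $G^{i}$ is $e_{BG} + e_{GG} \geq e_{BG} > e_{BB}$; since $|E^{i}| = e_{BB} + (e_{BG} + e_{GG})$, this says exactly that strictly more than half of the edges of $E^{i}$ are incident on good nodes, as claimed.

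I do not expect a real obstacle here; this is essentially the argument of Alon et al.~\cite{ABI86}. The only point that requires a little care is making the degree-based orientation well defined in the presence of ties while preserving the property that in-neighbors have no larger degree, which the fixed tie-breaking order handles; everything else is bookkeeping with the three edge classes $e_{BB}$, $e_{BG}$, $e_{GG}$.
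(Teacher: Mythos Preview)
Your argument is correct and is essentially the standard charging proof from~\cite{ABI86}; the only nit is that the strict inequality $\sum_{v\text{ bad}} d^{+}(v) > \tfrac{2}{3}(2e_{BB}+e_{BG})$ requires at least one bad node, but the all-good case is trivial. Note that the paper does not actually prove this lemma --- it is quoted verbatim from~\cite{ABI86} --- so there is no in-paper proof to compare against; your write-up matches the original source's approach.
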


\paragraph{Disjoint Winning Events.}
Consider some good node $v$ in $G^{i}$ with $d = d^{i}(v) > 0$ and let
$\widehat{\Neighbors}^{i}(v) = \{ u \in \Neighbors^{i}(v) \mid d^{i}(u) \leq
d \}$.
Recall that the definition of a good node implies that
$|\widehat{\Neighbors}^{i}(v)| \geq d / 3$.
We say that node $u \in \widehat{\Neighbors}^{i}(v)$ \emph{wins} $v$ in
tournament $i$ if
\[
X_{u}(i)
>
\max \left\{ X_{w}(i) \mid w \in \Neighbors^{i}(u) \cup
\widehat{\Neighbors}^{i}(v) - \{u\} \right\}
\]
and denote this event by $A^{i}(u, v)$.
The main observation now is that if $u$ wins $v$ in tournament $i$, then in
the end of their respective tournaments $i$, $u$ moves to state $\Win$ and $v$
moves to state $\Lose$.
Moreover, the events $A^{i}(u, v)$ and $A^{i}(w, v)$ are disjoint for every
$u, w \in \widehat{\Neighbors}^{i}(v)$, $u \neq w$.

Let $u_{1}, \dots, u_{k}$ be the nodes in
$\Neighbors^{i}(u) \cup \widehat{\Neighbors}^{i}(v)$,
where $0 < k \leq 2 \, d$ by the definition of a good node.
Let $B^{i}(u, v)$ denote the event that the maximum of
$\{ X_{u_{\ell}}(i) \mid 1 \leq \ell \leq k \}$
is attained at a single $1 \leq \ell \leq k$.
Since $X_{u_{1}}(i), \dots, X_{u_{k}}(i)$ are independent random variables that
obey distribution $\Geom(1 / 2) + 2$, it follows that
$\Probability(B^{i}(u, v)) \geq 2 / 3$.
Therefore,
\[
\Probability \left( A^{i}(u, v) \right) =
\Probability \left( A^{i}(u, v) \mid B^{i}(u, v) \right) \cdot
\Probability \left( B^{i}(u, v) \right) \geq
\frac{1}{k} \cdot \frac{2}{3} \, ,
\]
which implies that
\begin{align*}
\Probability \left( v \notin V^{i + 1} \mid v \text{ is good in } G^{i}
\right)
~ \geq ~ &
\Probability \left( \bigvee_{u \in \widehat{\Neighbors}^{i}(v)} A^{i}(u, v)
\right) \\
= ~ &
\sum_{u \in \widehat{\Neighbors}^{i}(v)} \Probability \left( A^{i}(u, v)
\right)
~ \geq ~
\frac{d}{3} \cdot \frac{1}{2 \, d} \cdot \frac{2}{3}
~ = ~
\frac{1}{9} \, .
\end{align*}
Combined with Lemma~\ref{lemma:GoodEdges}, we conclude that
$\Expectation[|E^{i + 1}|] < \frac{35}{36} \, |E^{i}|$.
Lemma~\ref{lemma:ExponentialSizeDecrease} follows by Markov's bound.

\begin{theorem}
There exists an nFSM protocol that computes an MIS in any $n$-node graph with
run-time $O (\log^2 n)$.
\end{theorem}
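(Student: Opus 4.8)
The plan is to assemble into a single argument the pieces already developed in this section, since most of the technical work is in place. First I would invoke Theorem~\ref{theorem:Synchronizer} and Theorem~\ref{theorem:MultiLetter} to legitimately restrict attention to a locally synchronous environment with multiple-letter queries, noting that each transformation costs only a constant multiplicative run-time overhead; hence it suffices to bound the round complexity of the protocol of Figure~\ref{figure:MisFsm} in this convenient environment. I would then observe that this protocol is a bona fide nFSM protocol: $|Q| = 7$, $|\Sigma| = 7$, and $b = 1$ are all absolute constants, independent of $n$ and of node degrees, so the model requirements are met.

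Next I would establish correctness in two parts. For \emph{safety}, I need that whenever the execution is in an output configuration, the set of $\Win$-nodes is a maximal independent set. Maximality is immediate: a node enters $\Lose$ only from a $\Down_{2}$-turn in which it found a $\Win$ letter in some port, so it has a $\Win$ neighbor. Independence is the delicate part: using Observation~\ref{observation:NeighborsTurns}, I would argue that if $v$ enters $\Win$ at the end of the $\Up_{j}$-turn of its tournament $i$, then at the relevant round every active neighbor $u$ lies in a turn within $\pm 1$ of $v$'s turn, so the coin-toss/count condition $\Number(\Up_{j}) = \Number(\Up_{j + 1 \bmod 3}) = 0$ together with the $\Down_{2}$ logic forces $u$ to move to $\Lose$ at the end of its own tournament $i$ and, in particular, precludes $u$ from entering $\Win$ as well. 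This is exactly the claim already asserted after Observation~\ref{observation:TournamentsProgress}, so the remaining work is to spell out the case analysis against the three bullets of Observation~\ref{observation:NeighborsTurns}. For \emph{liveness}, I would combine Observation~\ref{observation:TournamentsProgress} (which gives that $T_{V}(i, 1)$ is finite with probability $1$ for every $i$) with Lemma~\ref{lemma:ExponentialSizeDecrease}: since $|E^{i}|$ shrinks by a constant factor with constant probability, almost surely $|E^{i}| = 0$ for some finite $i$, and the logic of the $\Up$ states makes every degree-$0$ node of $G^{i}$ move to $\Win$ at the end of tournament $i$ (with probability $1$) and thus leave $V^{i + 1}$; every node in $V - V^{i}$ is already in an output state, so an output configuration is reached with probability $1$.

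For the run-time, I would set $Y = \min\{ i \in \Integers_{> 0} : |E^{i}| = 0 \}$ and, as already sketched, use Lemma~\ref{lemma:ExponentialSizeDecrease} to stochastically dominate $Y$ by $\NegativeBinomial(O(\log n), 1 - p) + O(\log n)$, giving $Y = O(\log n)$ in expectation and with high probability. By Observation~\ref{observation:TournamentsProgress}, $T_{V}(i + 1, 1) \leq T_{V}(i, 1) + O(\log n)$ with high probability for each $i$; a union bound over the $O(\log n)$ relevant values of $i$ followed by telescoping yields $T_{V}(Y, 1) = O(\log^{2} n)$ with high probability, and after time $T_{V}(Y, 1)$ at most one further tournament of length $O(\log n)$ is needed for the remaining degree-$0$ nodes to enter $\Win$. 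The in-expectation bound follows the same way, using that both $Y$ and the per-tournament increments of Observation~\ref{observation:TournamentsProgress} arise as maxima of at most $n$ independent $\Geom(1/2)$-type variables and hence have exponentially decaying upper tails beyond $\Theta(\log n)$. Altogether the round complexity is $O(\log^{2} n)$, and by the two compiler theorems the asynchronous run-time is $O(\log^{2} n)$ as well. I expect the main obstacle to be the safety/independence argument: making Observation~\ref{observation:NeighborsTurns} quantitatively precise enough to certify both that $v$ genuinely observes no $\Up$-letter from any active neighbor at the moment it commits to $\Win$, and that the $\Win$ letter it then transmits is necessarily observed by each such neighbor before that neighbor leaves its $\Down_{2}$-turn of the same tournament --- both hinging on the bounded ($\pm 1$ turn) misalignment guaranteed by the delaying-state sets $D(q)$ together with synchronization property (S2). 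The probabilistic core, Lemma~\ref{lemma:ExponentialSizeDecrease}, is the other substantive ingredient, but its proof via good nodes (Lemma~\ref{lemma:GoodEdges}) and disjoint winning events is already laid out, so the rest is bookkeeping.
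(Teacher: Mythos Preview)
Your proposal is correct and follows essentially the same route as the paper: invoke the synchronizer and multi-letter compilers, establish correctness via Observation~\ref{observation:NeighborsTurns} and the $\Down_{2}$ logic, and bound the run-time by combining the $Y = O(\log n)$ bound from Lemma~\ref{lemma:ExponentialSizeDecrease} with the $O(\log n)$-per-tournament bound of Observation~\ref{observation:TournamentsProgress}. Your write-up is in fact slightly more explicit than the paper's in spelling out the union bound over tournaments and the in-expectation argument, and you correctly flag the independence case analysis as the place where the most care is needed --- the paper itself treats that step tersely.
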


\LongVersion 
\section{Coloring a Tree with $3$ Colors}
\label{section:Coloring}
\LongVersionEnd 
\def\SectionTreeColoring{
Given a graph $G = (V, E)$, the \emph{coloring} problem asks for an assignment
of colors to the nodes such that no two neighboring nodes have the same color.
A coloring using at most $k$ colors is called a \emph{$k$-coloring}.
The smallest number of colors needed to color graph $G$ is called its
\emph{chromatic number}, denoted by $\chi(G)$.
In general, $\chi(G)$ is difficult to compute even in a centralized model
\cite{BellareGS98}.
As such, the distributed computing community is generally satisfied already
with a $(\Delta + 1)$-, $O(\Delta)$-, or even $\Delta^{O (1)}$-coloring, where
$\Delta = \Delta(G)$ is the largest degree in the graph $G$, with possibly
$\Delta(G) \gg \chi(G)$
\cite{CV86,Plot88,GPS88,Linial92,SV93,BE09,Kuhn09,BE11a,BE11b,Schneider2011ChromaticNumber}.
However, even for relatively simple graph classes, $\Delta$ may grow with $n$.
As the output of each node under the nFSM model is taken from a constant
size set, we must and will tackle a graph class that features a small
chromatic number: trees.

Any tree $T$ has a chromatic number $\chi(T) = 2$.
Unfortunately, it is easy to show that in general, the task of $2$-coloring
trees requires run-time proportional to the diameter of the tree even under
the message passing model, and hence cannot be achieved by an efficient
distributed algorithm.
The situation improves dramatically once $3$ colors are allowed;
indeed, Cole and Vishkin~\cite{CV86} presented a distributed algorithm that
$3$-colors directed paths, and in fact, any directed tree (directed in the
sense that each node knows the port leading to its unique parent), in time $O
(\log^{*} n)$.
Linial~\cite{Linial92} showed that this is asymptotically optimal. 

Since it is not clear how to represent directed trees in the nFSM model, we
focus on undirected trees, designing an nFSM protocol that $3$-colors any
$n$-node (undirected) tree in run-time $O (\log n)$.
A lower bound result of Kothapalli et al.~\cite{KothapalliSOS06} shows that
this cannot be improved (asymptotically) even by a message passing algorithm
as long as the size of each message is $O (1)$.

Employing Theorems \ref{theorem:Synchronizer} and \ref{theorem:MultiLetter},
we assume a locally synchronous environment and use multiple-letter queries.
The description of the protocol will not dwell into the level of defining the
states and transition function (as we did in \Section{}~\ref{section:MIS} for
the MIS protocol), but the reader will be easily convinced that this protocol
can indeed be implemented under the nFSM model.

\paragraph{The Modes.}
At all times, each node $v \in V$ is in one of the following three
\emph{modes}.
\\
(1) Mode $\Colored$: the color of $v$ is determined ($v$ is in an output
state) and it no longer takes an active part in the protocol. \\
(2) Mode $\Active$: the color of $v$ has not been determined yet and $v$ takes
an active part in the protocol. \\
(3) Mode $\Waiting$: the color of $v$ has not been determined yet and $v$ is
waiting for one of its neighbors to be colored before it resumes taking an
active part in the protocol (going back to mode $\Active$). \\

Initially, all nodes are in mode $\Active$.
When an $\Active$ node moves to mode $\Colored$, assigned with color $c \in \{
1, 2, 3 \}$, it transmits a `my color is $c$' message and it does not transmit
any more messages;
when an $\Active$ node moves to mode $\Waiting$, it transmits an `I am
$\Waiting$' message and it does not transmit any more messages until it
returns to mode $\Active$, in which case it transmits an `I am $\Active$'
message.
Therefore, the message stored in the port of node $v$ corresponding to neighbor
$u$ of $v$ always indicates (perhaps among other things) the current mode of
$u$.

\paragraph{The Phases.}
The execution of the protocol is divided into \emph{phases} indexed by the
positive integers, where each phase consists of $4$ rounds.
Consider some phase $i \in \Integers_{> 0}$.
Let $V^{i}$ be the set of $\Active$ nodes at the beginning of phase $i$ and
let $F^{i}$ be the forest induced on $T$ by $V^{i}$ ($F^{i}$ may contain one
or more trees), referred to as the \emph{$\Active$ forest}.
Given some node $v \in V^{i}$, let $\Neighbors^{i}(v) = \{ u \in V^{i} \mid
(u, v) \in E \}$ be the neighborhood of $v$ in $F^{i}$ and let $d^{i}(v)
= |\Neighbors^{i}(v)|$ be its degree.

The structure of the phases is as follows.
Consider some node $v \in V^{i}$.
In round $1$ of the phase, $v$ transmits an `I am $\Active$' message.
Setting the bounding parameter of the protocol to $b = 3$, we conclude that
in round $2$, $v$ can distinguish between the cases
$d^{i}(v) = 0$, $d^{i}(v) = 1$, $d^{i}(v) = 2$, and $d^{i}(v) \geq 3$
simply by querying its ports for `I am $\Active$' messages;
in other words, $v$ ``knows'' $f_{3}(d^{i}(v))$, i.e., its degree calculated
with respect to the one-two-many principle with bounding parameter $b = 3$.
Employing this ``knowledge'', $v$ transmits $f_{3}(d^{i}(v))$ in round $2$ of
phase $i$, so in round $3$, the port of $v$ corresponding to $u$ stores a
message indicating $f_{3}(d^{i}(u))$ for every node $u \in \Neighbors^{i}(v)$.

Rounds $3$ and $4$ of phase $i$ are dedicated to Procedure~$\RandomColoring$
that we will describe soon.
Whether or not $v$ runs Procedure~$\RandomColoring$ depends on the degree of
$v$ and on the degrees of its $\Active$ neighbors.
Specifically, $v$ runs Procedure~$\RandomColoring$ if:
(1) $d^{i}(v) = 0$;
(2) $d^{i}(v) = 1$ with $\Neighbors^{i}(v) = \{ u \}$ and $d^{i}(u) = 1$; or
(3) $d^{i}(v) = 2$ with $\Neighbors^{i}(v) = \{ u_1, u_2 \}$ and $d^{i}(u_1),
d^{i}(u_2) \leq 2$.
In contrast, if $d^{i}(v) = 1$ with $\Neighbors^{i}(v) = \{ u \}$ and
$d^{i}(u) \geq 2$, then $v$ moves to mode $\Waiting$ without running
Procedure~$\RandomColoring$, in which case we say (just for the sake of the
analysis) that $v$ \emph{waits on} $u$.
Otherwise ($d^{i}(v) \geq 3$ or $d^{i}(v) = 2$ with some neighbor $u \in
\Neighbors^{i}(v)$ such that $d^{i}(u) \geq 3$), $v$ remains in mode $\Active$
without running Procedure~$\RandomColoring$.

As stated beforehand, the $\Colored$ nodes do not take an active part in the
protocol.
A $\Waiting$ node $v$ moves to mode $\Active$ in the end of phase $i$ if some
neighbor $u$ of $v$, $u \in V^{i}$, moves to mode $\Colored$ during phase $i$
($v$ spots this event by querying on `my color is $c$' messages).

\paragraph{Procedure~$\RandomColoring$.}
Responsible for the actual color assignments, Procedure~$\RandomColoring$
takes $2$ rounds (rounds $3$ and $4$ of some phase).
Only an $\Active$ node may run the procedure, and when the procedure is over,
the node either stays in mode $\Active$ or moves to mode $\Colored$.
Consider some node $v$ running the procedure and let $C(v) \subseteq \{ 1, 2, 3
\}$ be the subset of colors which are not yet assigned to the neighbors of $v$
in $T$.
(Our analysis shows that if $v$ is $\Active$, then $C(v) \neq \emptyset$.)
As every $\Colored$ node transmits a message indicating its color, $v$ can
determine $C(v)$ by querying its ports.

In the first round of Procedure~$\RandomColoring$, $v$ picks some color $c \in
C(v)$ uniformly at random and transmits a `proposing color $c$' message.
In the second round of the procedure, if $v$ finds a `proposing color $c$'
(with the same $c$) in its ports, then it remains in mode $\Active$.
Otherwise (no neighbor of $v$ competes with $v$ over color $c$), it moves to
mode $\Colored$ and transmits a `my color is $c$' message.
This completes the description of our protocol.

\paragraph{The Waiting Hierarchy.}
The `waits on' relation induces a hierarchy referred to as the \emph{waiting
hierarchy} which is represented by a (collection of) directed tree(s) defined
over a subset of the edges of the tree $T$.
Our protocol is designed so that if $v$ waits on $u$, moving to mode
$\Waiting$ in phase $i$, then in phases $1, \dots, i$, $u$ was $\Active$, and in
phase $i + 1$, $u$ is either $\Active$ or $\Colored$.
Moreover, if $u$ is $\Active$ and $v \in \Neighbors(u)$ is $\Waiting$, then
$v$ must be waiting on $u$.
Note also that if $v$ waits on $u$ and $u$ moves to mode $\Colored$ in phase
$j$, then $v$ moves back to mode $\Active$ in (the beginning of) phase $j + 1$
and $d^{j + 1}(v) = 0$.

\begin{observation*}
In the beginning of phase $i$, $|C(v)| \geq \min \{ d^{i}(v) + 1, 3 \}$ for
every $i \in \Integers_{> 0}$ and node $v \in V^{i}$.
\end{observation*}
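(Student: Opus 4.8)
The plan is to prove the observation by induction on the phase index $i$, working with the equivalent quantity $\kappa_i(v) := 3 - |C(v)|$, i.e., the number of distinct colors used by the (permanently) $\Colored$ neighbors of $v$ in $T$ at the beginning of phase $i$; the claim is exactly $\kappa_i(v) \le \max\{2 - d^i(v),\, 0\}$. The base case $i = 1$ is immediate, since no node is $\Colored$ yet and hence $\kappa_1(v) = 0$. For the inductive step with $v \in V^i$ and $i \ge 2$, I would split according to the mode of $v$ at the beginning of phase $i - 1$, which is either $\Active$ or $\Waiting$ (it cannot be $\Colored$, as $\Colored$ is absorbing while $v \in V^i$).

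In the $\Active$ case, the first point is that $\Neighbors^i(v)$ cannot gain a vertex relative to $\Neighbors^{i-1}(v)$: a $\Colored$ neighbor never reactivates, and by the waiting-hierarchy property any $\Waiting$ neighbor $w$ of the $\Active$ node $v$ is waiting on $v$ itself, hence remains $\Waiting$ as long as $v$ is not $\Colored$, in particular through phase $i$. Letting $\beta$ be the number of neighbors of $v$ that move to $\Colored$ during phase $i - 1$ (all of which were $\Active$ in phase $i-1$, hence in $\Neighbors^{i-1}(v)$), this yields $d^i(v) \le d^{i-1}(v) - \beta$ and $\kappa_i(v) \le \kappa_{i-1}(v) + \beta$. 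The degree thresholds in the protocol further force $\beta = 0$ whenever $d^{i-1}(v) \ge 3$ (no neighbor of such a $v$ is eligible to run Procedure~$\RandomColoring$), and they also preclude $d^{i-1}(v) = 0$ (a degree-$0$ $\Active$ node runs Procedure~$\RandomColoring$ with a nonempty palette by the inductive hypothesis and is therefore $\Colored$ at the end of the phase, so it is not in $V^i$). Feeding the inductive hypothesis $\kappa_{i-1}(v) \le \max\{2 - d^{i-1}(v),\, 0\}$ into these relations, a short case analysis according to whether $d^{i-1}(v)$ equals $1$, equals $2$, or is at least $3$ gives $\kappa_i(v) \le \max\{2 - d^i(v),\, 0\}$; the case $d^{i-1}(v) = 1$ also absorbs the minor subtlety that $v$ may decide in phase $i-1$ to move to $\Waiting$ yet reactivate at once (with $d^i(v) = 0$) because its unique $\Active$ neighbor is $\Colored$ in that same phase.

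In the $\Waiting$ case, $v$ is waiting on some neighbor $u$, it moved to $\Waiting$ at the end of some phase $m \le i - 2$, and it returns to $\Active$ at phase $i$ precisely because $u$ moves to $\Colored$ in phase $i - 1$; by the waiting-hierarchy property this forces $d^i(v) = 0$, so it suffices to show $\kappa_i(v) \le 2$. At phase $m$ we had $d^m(v) = 1$ with $\Neighbors^m(v) = \{u\}$, so the inductive hypothesis gives $\kappa_m(v) \le 1$. The crux is that between phase $m$ and phase $i$ the only neighbor of $v$ that can become $\Colored$ is $u$: every other neighbor of $v$ is, already at phase $m$, either $\Colored$ (and stays so, contributing no new color) or $\Waiting$ — and, since $v$ is $\Active$ at phase $m$, each such $\Waiting$ neighbor is waiting on $v$ and hence stays $\Waiting$, so uncolored, through phase $i$. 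Therefore $\kappa_i(v) \le \kappa_m(v) + 1 \le 2$, completing the induction.

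I expect the main obstacle to be exactly the bookkeeping in the $\Waiting$ case: one must use the structure of the waiting hierarchy carefully to certify that every non-$u$ neighbor of $v$ that is not already $\Colored$ is in fact waiting on $v$ and thus frozen, so that no uncontrolled extra colors accumulate around $v$ during its (potentially long) dormant period. The $\Active$ case is more mechanical but still needs the same care to confirm that $\Neighbors^i(v)$ never gains a vertex and that the degree thresholds genuinely rule out $\beta > 0$ when $d^{i-1}(v) \ge 3$.
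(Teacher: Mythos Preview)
Your proof is correct and rests on the same two facts the paper uses: a node with $d^{i}(v) \geq 3$ blocks every neighbor from running Procedure~$\RandomColoring$, and any non-$\Active$ neighbor of an $\Active$ node must be waiting on it and hence cannot become $\Colored$ before it does. The paper's argument is more compressed --- it simply notes that $C(v) = \{1,2,3\}$ until the first phase with $d^{i}(v) \leq 2$, and that from then on each $\Active$ neighbor that becomes $\Colored$ decrements both $|C(v)|$ and $d^{i}(v)$ by one, with waiting neighbors frozen --- whereas you unfold this into an explicit induction on $i$ with a mode-based case split, but the underlying mechanism is identical.
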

\begin{proof}
As long as $d^{i}(v) \geq 3$, no neighbor of $v$ can run
Procedure~$\RandomColoring$, and hence no neighbor of $v$ can move to mode
$\Colored$.
Therefore, $C(v) = \{ 1, 2, 3 \}$ in the beginning of the first phase $i \in
\Integers_{> 0}$ such that $d^{i}(v) \leq 2$.
From that moment on, every $\Active$ neighbor of $v$ that moves to mode
$\Colored$ decreases both $|C(v)|$ and $d^{i}(v)$ by $1$.
The assertion is completed by recalling that non-$\Active$ neighbors of $v$
must be waiting on $v$ and hence, cannot move to mode $\Colored$ before $v$
does.
\end{proof}

\begin{corollary} \label{corollary:GettingColored}
Consider some node $v \in V^{i}$ that runs Procedure~$\RandomColoring$.
If $d^{i}(v) = 0$, then $v$ moves to mode $\Colored$ with probability $1$.
Otherwise ($d^{i}(v)$ is either $1$ or $2$), $v$ moves to mode $\Colored$ with
a positive constant probability.
\end{corollary}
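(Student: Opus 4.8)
The plan is to split into the two cases dictated by the statement and to show in each case that $v$ succeeds in moving to mode $\Colored$ with (at least) a constant probability, using the Observation immediately preceding the corollary to control the palette size $|C(v)|$ relative to the degree $d^{i}(v)$.

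First I would dispose of the case $d^{i}(v) = 0$. Here $v$ has no $\Active$ neighbor, and by the Observation $|C(v)| \geq \min\{d^{i}(v)+1, 3\} = 1$, so $C(v) \neq \emptyset$ and $v$ has at least one legal color to propose. Since $v$ runs Procedure~$\RandomColoring$ it proposes some color $c \in C(v)$; because no neighbor of $v$ is $\Active$, no neighbor runs the procedure in this phase (only $\Active$ nodes run it), hence no neighbor transmits a `proposing color $c$' message. Consequently, in the second round of the procedure $v$ finds no competing proposal and moves to mode $\Colored$. This happens with probability $1$, as claimed. (One should note that $\Waiting$ neighbors waiting on $v$, or $\Colored$ neighbors, do not transmit any `proposing color' message, so they cannot interfere; this is exactly the content of the waiting hierarchy discussion.)

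Next I would handle $d^{i}(v) \in \{1, 2\}$. By the Observation, $|C(v)| \geq \min\{d^{i}(v)+1, 3\} \geq 2$, so $v$ has at least two colors to choose from, and in fact $|C(v)| \geq d^{i}(v)+1$. The key point is that the at most $d^{i}(v)$ $\Active$ neighbors of $v$ that run Procedure~$\RandomColoring$ in this phase each pick a color uniformly and independently (over their respective palettes). I would lower-bound the probability that the color $c$ chosen by $v$ differs from the colors chosen by all of its competing neighbors: since $v$ draws $c$ uniformly from a set of size $|C(v)| \geq d^{i}(v)+1$, a union bound over the at most $d^{i}(v)$ neighbors shows that the probability that some neighbor also proposes $c$ is at most $\sum_{u} \Probability(\text{$u$ proposes } c)$, and for each such $u$ this probability is at most $1/|C(u)| \cdot (\text{number of ways})$ — more simply, conditioning on all neighbors' choices first and then drawing $v$'s color, the probability that $v$'s uniform draw from $C(v)$ lands outside the (at most $d^{i}(v)$) forbidden values is at least $(|C(v)| - d^{i}(v))/|C(v)| \geq 1/(d^{i}(v)+1) \geq 1/3$. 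Since $d^{i}(v) \leq 2$, this is a positive constant (at least $1/3$), and whenever $v$'s color avoids all competing proposals, $v$ moves to mode $\Colored$ in the second round of the procedure. This yields the claimed positive constant probability.

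The only subtle point — and the step I expect to require the most care — is justifying that the competing proposals come only from the $\Active$ neighbors of $v$ that themselves run Procedure~$\RandomColoring$, and that their palettes and choices are genuinely available to be conditioned on independently of $v$'s. Here I would invoke the structure already established: $\Colored$ neighbors have fixed colors and send no proposals; $\Waiting$ neighbors send no proposals and, by the waiting-hierarchy invariant, any $\Waiting$ neighbor of an $\Active$ node $v$ is in fact waiting on $v$, so it contributes nothing this phase; and the $\Active$ neighbors that do \emph{not} run the procedure (those of degree $\geq 3$, or degree $2$ with a high-degree neighbor) likewise transmit no `proposing color' message. Hence the set of potential competitors is exactly the $\Active$ neighbors running $\RandomColoring$, of which there are at most $d^{i}(v) \leq 2$, and conditioning on their (independent, uniform) color choices before revealing $v$'s own independent uniform choice makes the union-bound estimate rigorous. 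Putting the two cases together establishes the corollary.
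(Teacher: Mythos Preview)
Your proposal is correct and matches the paper's intended reasoning: the paper states this as a corollary of the preceding Observation with no explicit proof, and your argument---using the palette bound $|C(v)|\geq\min\{d^{i}(v)+1,3\}$, noting that only $\Active$ neighbors running $\RandomColoring$ can compete, and conditioning on their independent choices to get success probability at least $(|C(v)|-d^{i}(v))/|C(v)|\geq 1/3$---is exactly the natural elaboration of why it follows.
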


Let $\widetilde{V}^{i}$ be the restriction of $V^{i}$ to nodes $v$ that were
$\Active$ in all phases $1, \dots, i$;
this is, $\widetilde{V}^{i}$ does not include $\Waiting$ nodes that became
$\Active$ again (recall that these will move to mode $\Colored$ in the next
phase with probability $1$).
Let $\widetilde{F}^{i}$ be the forest induced on $T$ by $\widetilde{V}^{i}$.
Given some node $v \in \widetilde{V}^{i}$, let $\widetilde{\Neighbors}^{i}(v)
= \{ u \in \widetilde{V}^{i} \mid (u, v) \in E \}$ be the neighborhood of $v$
in $\widetilde{F}^{i}$ and let $\widetilde{d}^{i}(v) =
|\widetilde{\Neighbors}^{i}(v)|$ be its degree.
Observe that if $v \in \widetilde{V}^{i}$, then $v \in V^{i}$ and
$\widetilde{d}^{i}(v) = d^{i}(v)$.
Therefore, if $v \in V^{i} - \widetilde{V}^{i}$, then $d^{i}(v) = 0$, in which
case $v$ runs Procedure~$\RandomColoring$ in phase $i$ and
Corollary~\ref{corollary:GettingColored} guarantees that $v \notin V^{i + 1}$.

The correctness of the protocol can now be established:
The logic of Procedure~$\RandomColoring$ implies that every output
configuration is a legal coloring.
Since $\Active$ leaves are removed from $\widetilde{F}^{i}$ with probability
$1$ and since every tree has at least two leaves, it follows that
$\widehat{V}^{1 + k} = \emptyset$ for $k = \lceil n / 2 \rceil$.
Combining the properties of the waiting hierarchy with
Corollary~\ref{corollary:GettingColored}, we conclude that the execution
reaches an output configuration within at most $k$ additional phases.
It remains to analyze the run-time of our protocol.

\paragraph{Good nodes.}
Consider some tree $T'$.
We say that node $v$ of $T'$ is \emph{good} if $v$ is a leaf or if the
degree of $v$ is $2$ and both neighbors of $v$ are of degree at most $2$.

\begin{observation} \label{obervation:ManyGoodNodes}
In every tree, at least a $(1 / 5)$-fraction of the nodes are good. 
\end{observation}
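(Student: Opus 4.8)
The plan is to classify the nodes of $T'$ by degree and then count, using the handshake lemma to relate the counts. Assume $T'$ has at least two nodes (for a single node the notion of ``good'' is degenerate and the statement is not needed), so $T'$ is connected and every node has degree at least $1$. Let $L$ be the set of leaves (degree exactly $1$), let $P$ be the set of degree-$2$ nodes, and let $H$ be the set of nodes of degree at least $3$; write $d = |L|$, $p = |P|$, $h = |H|$, so that $n := |V(T')| = d + p + h$. By definition the good nodes are exactly the nodes of $L$ together with the subset $P_g \subseteq P$ of degree-$2$ nodes whose two neighbors each have degree at most $2$; the complementary set $P_b = P \setminus P_g$ consists precisely of the degree-$2$ nodes having at least one neighbor in $H$.

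First I would invoke the handshake lemma for the tree $T'$, namely $d + 2p + \sum_{v \in H} \deg(v) = 2(n - 1)$. Since $\deg(v) \ge 3$ for every $v \in H$, this yields $d + 2p + 3h \le 2(d + p + h) - 2$, i.e.\ $h \le d - 2$; and substituting $n = d + p + h$ back into the identity gives the exact value $\sum_{v \in H} \deg(v) = d + 2h - 2$. Next I would bound $|P_b|$: every node of $P_b$ is the endpoint of at least one edge whose other endpoint lies in $H$, distinct $P_b$-nodes account for distinct such edges, and every such edge is counted in $\sum_{v \in H}\deg(v)$, whence $|P_b| \le \sum_{v \in H}\deg(v) = d + 2h - 2$. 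Consequently the number $g$ of good nodes satisfies $g = d + |P_g| = d + (p - |P_b|) \ge d + p - (d + 2h - 2) = p - 2h + 2$, and trivially also $g \ge d$.

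Finally I would combine the two lower bounds with the degree identity. From $g \ge d$ and $g \ge p - 2h + 2$ we get $5g \ge 4d + (p - 2h + 2)$, and $4d + p - 2h + 2 - n = 3d - 3h + 2 \ge 0$ because $d \ge h + 2$. Hence $5g \ge n$, i.e.\ $g \ge n/5$, as claimed. The only step that is not pure bookkeeping is the estimate $|P_b| \le \sum_{v \in H}\deg(v)$ combined with the identity $\sum_{v \in H}\deg(v) = d + 2h - 2$: this is what converts the count of ``bad path nodes'' into the degree statistics that the handshake lemma controls, and getting the constant exactly down to $1/5$ hinges on using $h \le d-2$ rather than a weaker bound. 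The remaining edge case (a two-node tree, where both nodes are leaves and hence good) is immediate.
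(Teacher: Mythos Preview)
The paper states this observation without proof, so there is no original argument to compare against. Your counting argument is correct: the handshake identity $\sum_{v\in H}\deg(v)=d+2h-2$ together with the injection from $P_b$ into the edges incident to $H$ gives $|P_b|\le d+2h-2$, and combining the two lower bounds $g\ge d$ and $g\ge p-2h+2$ with $h\le d-2$ yields $5g\ge n$. The edge cases ($h=0$, i.e.\ a path, and $n\le 2$) are handled as you note.
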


Consider some $i \in \Integers_{> 0}$ and some node $v \in \widetilde{V}^{i}$.
Let $T'$ be the tree to which $v$ belongs in $\widetilde{F}^{i}$.
We argue that if $v$ is good in $T'$, then $v \notin \widetilde{V}^{i}$ with a
positive constant probability.
Indeed, if $v$ is a leaf in $T'$, which means that $\widetilde{d}^{i}(v) =
d^{i}(v) = 1$, then it either moves to mode $\Waiting$ with probability $1$
(if the neighbor of $v$ has a higher degree) or it runs
Procedure~$\RandomColoring$, in which case
Corollary~\ref{corollary:GettingColored} guarantees that $v$ moves to mode
$\Colored$ with a positive constant probability;
if $\widetilde{d}^{i}(v) = d^{i}(v) = 2$ and both neighbors of $v$ in
$F^{i}$ (and in $T'$) are of degree at most $2$, then $v$ runs
Procedure~$\RandomColoring$, in which case
Corollary~\ref{corollary:GettingColored} again guarantees that $v$ moves to
mode $\Colored$ with a positive constant probability.
Since Corollary~\ref{corollary:GettingColored} also guarantees that
nodes of degree $0$ in $\widetilde{F}^{i}$ move to mode $\Colored$ with
probability $1$, we can employ Observation~\ref{obervation:ManyGoodNodes}
and Markov's bound to establish the following observation.

\begin{observation} \label{observation:ManyActiveNodesVanish}
There exists two constants $0 < p, c < 1$ such that $|\widetilde{V}^{i + 1}|
\leq c |\widetilde{V}^{i}|$ with probability at least $p$.
\end{observation}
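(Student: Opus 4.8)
The plan is to combine the probabilistic estimate established in the paragraph preceding the statement with the combinatorial bound of Observation~\ref{obervation:ManyGoodNodes} and a one-line application of Markov's inequality, exactly in the spirit of the argument that derived Lemma~\ref{lemma:ExponentialSizeDecrease} from Lemma~\ref{lemma:GoodEdges}.

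First I would fix a constant $q > 0$ such that every node $v \in \widetilde{V}^{i}$ that is good in its tree of $\widetilde{F}^{i}$ satisfies $\Probability(v \notin \widetilde{V}^{i+1}) \geq q$. This is precisely what was argued above: using $\widetilde{d}^{i}(v) = d^{i}(v)$ and $\widetilde{\Neighbors}^{i}(v) = \Neighbors^{i}(v)$ for $v \in \widetilde{V}^{i}$, a good $v$ is either a leaf --- in which case it either moves to mode $\Waiting$ with probability $1$, or runs Procedure~$\RandomColoring$ and gets colored with positive constant probability by Corollary~\ref{corollary:GettingColored} --- or has $d^{i}(v) = 2$ with both neighbors of degree at most $2$, in which case it again runs Procedure~$\RandomColoring$ and gets colored with positive constant probability. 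In all of these cases $v$ leaves mode $\Active$, hence $v \notin V^{i+1}$ and certainly $v \notin \widetilde{V}^{i+1}$.

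Next, summing Observation~\ref{obervation:ManyGoodNodes} over the trees of the forest $\widetilde{F}^{i}$, at least $|\widetilde{V}^{i}|/5$ nodes of $\widetilde{V}^{i}$ are good. By linearity of expectation (no independence between the events for distinct good nodes is needed), $\Expectation\!\left[|\widetilde{V}^{i} \setminus \widetilde{V}^{i+1}|\right] \geq (q/5)\,|\widetilde{V}^{i}|$, and hence $\Expectation\!\left[|\widetilde{V}^{i+1}|\right] \leq c'\,|\widetilde{V}^{i}|$ with $c' = 1 - q/5 < 1$. Finally, pick any constant $c$ with $c' < c < 1$; since $|\widetilde{V}^{i+1}| \leq |\widetilde{V}^{i}|$ always (and the claim is trivial when $\widetilde{V}^{i} = \emptyset$), Markov's inequality gives $\Probability\!\left(|\widetilde{V}^{i+1}| > c\,|\widetilde{V}^{i}|\right) \leq c'/c$, so the statement holds with $p = 1 - c'/c \in (0,1)$.

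The only point requiring genuine care --- and the step I expect to be the main obstacle --- is the bookkeeping that a node good in its component of the \emph{restricted} forest $\widetilde{F}^{i}$ really does meet the exact degree hypotheses that Corollary~\ref{corollary:GettingColored} needs (which are phrased via $d^{i}(\cdot)$, i.e., degrees in the full $\Active$ forest $F^{i}$), and that ``leaving mode $\Active$'' --- whether by becoming $\Colored$ or by going $\Waiting$ --- genuinely removes the node from $\widetilde{V}^{i+1}$ rather than only from $\widetilde{V}$ temporarily; degree-$0$ nodes, which vanish with probability $1$, only help and can be folded in or ignored. Everything else is routine.
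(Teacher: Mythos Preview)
Your proposal is correct and follows essentially the same approach as the paper: establish that good nodes in $\widetilde{F}^{i}$ leave $\widetilde{V}^{i+1}$ with constant probability (handling degree-$0$ nodes separately), invoke Observation~\ref{obervation:ManyGoodNodes} to lower-bound the fraction of good nodes, and finish with linearity of expectation plus Markov. The bookkeeping concern you flag is exactly the point the paper dispatches via the identity $\widetilde{d}^{i}(v) = d^{i}(v)$ for $v \in \widetilde{V}^{i}$, so you have identified both the argument and its only delicate step correctly.
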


Similarly to the analysis in \Section{}~\ref{section:MIS}, define the random
variable $Y = \min\{ i \in \Integers_{> 0} : |\widetilde{V}^{i}| = 0 \}$.
Observation~\ref{observation:ManyActiveNodesVanish} implies
that $Y$ is stochastically dominated by a random variable that obeys
distribution $\NegativeBinomial(O (\log n), 1 - p) + O (\log n)$, namely, a
fixed term of $O (\log n)$ plus the negative binomial distribution with
parameters $O (\log n)$ and $1 - p$, hence $Y = O (\log n)$ in expectation and
with high probability.
Since $Y$ bounds from above the depth of the waiting hierarchy, it follows
that the execution reaches an output configuration within $2 Y$ phases, which
completes the analysis.

\begin{theorem} \label{theorem:TreeColoring}
There exists an nFSM protocol that $3$-colors any $n$-node (undirected) tree
with run-time $O (\log n)$.
\end{theorem}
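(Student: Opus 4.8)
The plan is to collect the structural and probabilistic facts developed for the phase-based protocol of this section into a single run-time bound. By Theorems~\ref{theorem:Synchronizer} and~\ref{theorem:MultiLetter} it suffices to work in a locally synchronous environment with multiple-letter queries, so I would use the protocol described above verbatim: each phase consists of four rounds, the bounding parameter is $b = 3$ (enough to distinguish active-degrees $0, 1, 2, \geq 3$), rounds $1$--$2$ let every $\Active$ node learn $f_3(d^i(v))$ and the $f_3$-degrees of its $\Active$ neighbors, and rounds $3$--$4$ run Procedure~$\RandomColoring$ for the eligible nodes while a degree-$1$ node with a higher-degree $\Active$ neighbor moves to mode $\Waiting$, recording the ``waits on'' relation.

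For correctness I would first argue the coloring invariant: the conflict resolution in Procedure~$\RandomColoring$ (propose a uniformly random free color, commit only if no neighbor proposed the same color) makes every output configuration a proper coloring. Next I would establish, by induction on $i$, the Observation that $|C(v)| \geq \min\{d^i(v) + 1, 3\}$ for every $v \in V^i$: a node of active-degree $\geq 3$ has no neighbor that can reach mode $\Colored$, and an $\Active$ node's non-$\Active$ neighbors are precisely the nodes waiting on it and hence cannot be colored first. This shows $C(v) \neq \emptyset$, so an $\Active$ node always has a color to propose, which yields Corollary~\ref{corollary:GettingColored}: a node that runs the procedure is colored with probability $1$ if its active-degree is $0$, and with a positive constant probability otherwise.

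The run-time bound is the heart of the argument and follows the pattern of \Section{}~\ref{section:MIS}. I would restrict attention to $\widetilde{V}^i$, the nodes that were $\Active$ throughout phases $1, \dots, i$ (this discards $\Waiting$ nodes that just re-activated, which by Corollary~\ref{corollary:GettingColored} vanish in the next phase with probability $1$), and observe that a node which is \emph{good} in its tree of $\widetilde{F}^i$ --- a leaf, or a degree-$2$ node both of whose neighbors have degree $\leq 2$ --- either waits or runs Procedure~$\RandomColoring$, hence leaves $\widetilde{V}^i$ with a positive constant probability. Combining Observation~\ref{obervation:ManyGoodNodes} (at least a $1/5$-fraction of the nodes of any tree are good) with linearity of expectation and Markov's bound gives Observation~\ref{observation:ManyActiveNodesVanish}: there are absolute constants $0 < p, c < 1$ with $|\widetilde{V}^{i+1}| \leq c |\widetilde{V}^i|$ with probability at least $p$. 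Exactly as before, $Y = \min\{ i \in \Integers_{> 0} : |\widetilde{V}^i| = 0 \}$ is then stochastically dominated by $\NegativeBinomial(O(\log n), 1 - p) + O(\log n)$, so $Y = O(\log n)$ in expectation and with high probability. Finally, $Y$ upper-bounds the depth of the waiting hierarchy --- if $v$ waits on $u$ then $u$ stays persistently $\Active$ strictly longer, and once $u$ becomes $\Colored$ the node $v$ re-activates with active-degree $0$ and is colored the next phase --- so the whole execution terminates within $2Y$ phases, i.e., in run-time $O(\log n)$.

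I expect the main obstacle to be the bookkeeping around the $\Waiting$ mode rather than the probabilistic decay: one must verify that the ``waits on'' relation never forms a cycle or a chain outliving its top node, that the inductive proof of $|C(v)| \geq \min\{d^i(v)+1,3\}$ interacts correctly with frozen degree-$\geq 3$ nodes, and that re-activated $\Waiting$ nodes genuinely have active-degree $0$ (so Corollary~\ref{corollary:GettingColored} applies and they do not reappear in $\widetilde{V}^{i+1}$) --- all of which is needed before the geometric decay of $|\widetilde{V}^i|$ can be turned into a bound of $2Y = O(\log n)$ phases on the total execution.
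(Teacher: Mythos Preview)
Your proposal is correct and follows essentially the same argument as the paper: the same phase-based protocol with modes and the waiting hierarchy, the invariant $|C(v)| \geq \min\{d^i(v)+1,3\}$, the reduction to $\widetilde{V}^i$, the good-node count (Observation~\ref{obervation:ManyGoodNodes}) combined with Markov to obtain geometric decay (Observation~\ref{observation:ManyActiveNodesVanish}), the negative-binomial domination giving $Y = O(\log n)$, and the final $2Y$ bound via the depth of the waiting hierarchy. The bookkeeping concerns you flag are exactly the points the paper handles in the paragraphs on the waiting hierarchy and on $\widetilde{V}^i$, so nothing is missing.
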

} 
\LongVersion 
\SectionTreeColoring{}
\LongVersionEnd 

\ShortVersion 
\sloppy\ignorespaces
\ShortVersionEnd 
\section{Computational Power}
\label{section:LBA}
A \emph{deterministic linear bounded automaton (dLBA)} is a (deterministic)
Turing machine whose working tape is restricted to the cells specifying the
input (this is equivalent to a $\mathrm{DSPACE}(O (n))$ Turing machine).
A \emph{non-deterministic linear bounded automaton}, a.k.a., \emph{linear
bounded automaton (LBA)}, is the non-deterministic version of a dLBA, and a
\emph{randomized linear bounded automaton (rLBA)} is the randomized version.
Kuroda~\cite{Kuro64} proved that the class of languages that can be decided by
an LBA is exactly the context-sensitive languages, corresponding to the Type-1
grammars in Chomsky's hierarchy of formal languages \cite{Chom56}.
Whether LBAs are equivalent to dLBAs and where exactly do rLBAs lie between
the two are major open questions in computational complexity (cf. the first
LBA problem).
The following two lemmas
\ShortVersion 
(proofs are deferred to Appendices \ref{appendix:SimulateNfsm} and
\ref{appendix:SimulateRlba})
\ShortVersionEnd 
show that in terms of its computational power (regardless of run-time
considerations), an nFSM protocol is essentially equivalent to an rLBA.
\ShortVersion 
\par\fussy\ignorespacesafterend
\ShortVersionEnd 

\begin{lemma} \label{lemma:SimulateNfsm}
An nFSM protocol on a graph $G$ of arbitrary topology can be simulated by an
rLBA.
\end{lemma}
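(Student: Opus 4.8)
The plan is to have the rLBA maintain, on its (linearly bounded) work tape, a complete description of the global configuration of the nFSM execution, and to advance it one node-step at a time according to a fixed, legitimate adversarial schedule. Concretely, I would reserve one block of cells per node $v \in V$ holding the current state $q_v \in Q$, and one block per ordered incident pair $(u,v)$ holding the letter currently stored in port $\psi_u(v)$. Since $|Q|$, $|\Sigma|$ and $b$ are universal constants, this uses $O(n) + O(\sum_v \deg(v)) = O(n + m)$ cells, which is linear in the length of any reasonable encoding of $G$ given as input (adjacency list or matrix); so the simulator stays within the $\mathrm{DSPACE}(O(n))$ bound defining an rLBA. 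The initial states are read off from the input (encoding the per-node inputs of the distributed problem), and all ports are initialized to $\sigma_0$.

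To simulate a single step of node $v$, I would: (i) read $q_v$ and compute its query letter $\sigma = \lambda(q_v)$; (ii) scan the port blocks $\psi_u(v)$ for $u \in \Neighbors(v)$, maintaining a counter that saturates at $b$, thereby computing $f_b(\Number(\sigma))$ --- this needs only $O(1)$ extra cells for the counter plus an $O(\log n)$-cell scanning pointer; (iii) look up the constant-size set $\delta\bigl(q_v, f_b(\Number(\sigma))\bigr) \subseteq Q \times (\Sigma \cup \{\varepsilon\})$ in the machine's finite control and pick one pair $(q', \sigma')$ uniformly at random using the rLBA's coin tosses (constantly many random bits suffice, e.g.\ by rejection sampling); (iv) overwrite $q_v$ with $q'$ and, if $\sigma' \neq \varepsilon$, overwrite every port $\psi_u(v)$, $u \in \Neighbors(v)$, with $\sigma'$. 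The overall schedule simply cycles through $V$ in round-robin order, performing one such node-step per turn with instantaneous message delivery; after each step (or each full pass) the machine scans all state blocks and halts, outputting the current configuration, as soon as every $q_v$ lies in $Q_O$.

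For correctness, the round-robin schedule with instantaneous delivery corresponds to a perfectly legitimate adversarial policy: every step length $L_{v,t}$ and every delay $D_{v,t,u}$ is finite (indeed bounded), and the FIFO requirement is trivially met since each message is delivered before the next one is produced. Since protocol $\Pi$ is correct under the nFSM model, under this particular policy it reaches an output configuration within finite time with probability $1$, and every output configuration reached with positive probability is a valid solution; hence the rLBA halts with probability $1$ and emits a valid solution, with the induced output distribution matching that of $\Pi$ under this policy. The only point needing a little care is the space accounting --- verifying that the pointers and counters used while executing a node-step never exceed $O(\log n)$ cells and that the stored configuration is genuinely linear in the input length --- but this is immediate once the encoding of $G$ is fixed, so I do not anticipate a real obstacle; the main conceptual content is just the observation that an rLBA, controlling its own randomness, is free to pick one convenient legal schedule and rely on $\Pi$'s correctness against \emph{all} schedules.
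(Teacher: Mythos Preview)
Your proposal is correct and essentially matches the paper's approach: store the global configuration (node states and port contents) in space linear in the adjacency-list encoding of $G$, then advance it step by step using the hard-wired transition function and the rLBA's random bits. The only cosmetic difference is the chosen schedule --- the paper simulates synchronous rounds via two sweeps (first compute every node's next state and outgoing letter, then deliver all letters), whereas you process nodes round-robin with immediate delivery --- but both are legitimate adversarial policies, so this does not affect correctness, and your explicit justification of that point is, if anything, a bit more careful than the paper's.
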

\def\ProofLemmaSimulateNfsm{
The input for the Turing machine is the graph $G$, given as an adjacency
list.
In order to simulate the execution of the nFSM protocol, we store some
additional information in the entries of the adjacency list as follows:
For each node $v$, we store its current state and the next letter it
transmits.
For every node $u$ in the list of neighbors $N(v)$ attached to $v$, we store
the entry of $u$'s port that corresponds to $v$.
In each round of the nFSM protocol, the rLBA performs two sweeps of the list
of nodes:
The first sweep serves to calculate $v$'s next state $q$ and transmitted
letter $\sigma$ for all nodes $v$, based on $v$'s current state and the
messages in its ports, according to the nFSM state machine, which is
hard-wired in the rLBA.
However, the calculated letter $\sigma$ is not being ``transmitted'' yet, so
the calculations for subsequent nodes in the list are not messed up, but
rather stored in the corresponding place next to $v$.
In the second sweep, for every node $v$, the letter $\sigma$ is being
``transmitted'', that is, the lists of neighbors are traversed, and at each
occurrence of $v$, the current letter is replaced by $\sigma$.
This way, we simulate every round of the nFSM protocol.
In total, our simulation requires additional $O(1)$ space per node and $O(1)$
space per edge, hence it can be implemented with an rLBA.
The assertion follows.
} 
\LongVersion 
\begin{proof}
\ProofLemmaSimulateNfsm{}
\end{proof}
\LongVersionEnd 

\begin{lemma} \label{lemma:SimulateRlba}
An rLBA can be simulated by an nFSM protocol on a path.
\end{lemma}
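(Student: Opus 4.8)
The plan is to encode the rLBA's tape on the path and let each node simulate one tape cell, with the head position, head state, and random bits being shuttled around via the $O(1)$-size message alphabet. Concretely, the path $v_1 - v_2 - \dots - v_n$ will have node $v_k$ hold (i) the current symbol in tape cell $k$ of the rLBA, and (ii) a boolean flag recording whether the rLBA head currently sits on cell $k$, and if so, the finite control state $q$ of the rLBA. The rLBA input is encoded by the initial states of the nodes (this is exactly what $Q_I$ is for in the nFSM model). A single transition of the rLBA is then simulated by a constant number of nFSM rounds: the node holding the head queries its own cell symbol and the head state (both stored in its state), uses the hard-wired transition function of the rLBA to pick --- uniformly at random, which is exactly the semantics of $\delta$ --- a next symbol, a next control state, and a move direction, rewrites its own cell symbol, and then transmits a ``token'' message carrying the new control state toward the appropriate neighbor; that neighbor picks up the token and becomes the new head-holder. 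When the rLBA enters an accepting or rejecting state, the head node broadcasts a ``halt/accept'' or ``halt/reject'' message down the path, and every node moves to the corresponding output state.

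First I would make the bounding parameter irrelevant by noting that a path has maximum degree $2$, so $b = 1$ already suffices: each node never needs to count past ``is there at least one neighbor sending me a token?'' Using Theorems~\ref{theorem:Synchronizer} and \ref{theorem:MultiLetter} I assume a locally synchronous environment with multiple-letter queries, so a node can in one round read the (constant-size) messages in both of its ports. The message alphabet $\Sigma$ is then the finite set $\{\text{idle}\} \cup (\{\text{token-left},\text{token-right}\} \times Q_{\text{rLBA}}) \cup \{\text{halt-accept},\text{halt-reject}\}$, which is of constant size since $Q_{\text{rLBA}}$ is. The second technical point is the endpoints: the two endpoints of the path must simulate the left/right endmarkers of the rLBA's tape, which is a trivial special-casing of the transition logic (the head simply cannot move past them, consistent with the dLBA/LBA convention). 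A node distinguishes itself as an endpoint by observing, in the first round, that it has at most one neighbor transmitting the initial letter --- or, more cleanly, we just let the two endpoint nodes receive distinguished input states.

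The main obstacle I anticipate is getting the synchronization and correctness argument exactly right despite asynchrony. After invoking the synchronizer we have a locally synchronous environment, but ``the head is in exactly one place'' is a global invariant, and I must argue it is preserved: since the token is passed from one node to an adjacent node and the sender relinquishes the head in the same round the receiver claims it, local synchrony (property (S2)) guarantees the handoff is clean and no round sees zero or two heads. I would prove by induction on the number of simulated rLBA steps that after the $t$-th simulation block, the joint state of the path faithfully encodes the rLBA configuration after $t$ steps, with the distribution over configurations matching that of the rLBA (this uses that each nFSM step picks uniformly from $\delta(\cdot,\cdot)$, matching the rLBA's coin tosses). Correctness in the sense of the nFSM model then follows: the rLBA halts with probability $1$ on every branch of its computation that we care about (or, if we only require simulation of a machine that halts with probability $1$, this is inherited), at which point the halt message propagates to all $n$ nodes in $O(n)$ further rounds and everyone enters an output state encoding accept/reject. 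Since the lemma claims only simulability and not any run-time bound, no complexity estimate is needed, which removes the only place where the $O(n)$ head-walk and halt-broadcast overheads would have mattered.
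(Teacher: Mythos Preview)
Your approach is essentially the paper's, but you are missing one ingredient that the paper makes explicit and that your write-up glosses over. In the nFSM model a transmitted letter reaches \emph{all} neighbors, and a node can only count occurrences of each letter across its ports --- it cannot tell which port a given letter sits in. So when the head at $v_k$ transmits ``token-left'', both $v_{k-1}$ and $v_{k+1}$ see exactly the same letter in one of their two ports; nothing in your node state lets $v_{k-1}$ conclude it is the intended recipient while $v_{k+1}$ concludes it is not. Your phrasing ``transmits \dots\ toward the appropriate neighbor; that neighbor picks up the token'' reads as if you are assuming directed transmission or port labels, neither of which the model provides. Multiple-letter queries do not help here: on a path they let a node recover the multiset of its two port contents, but still not which neighbor sent what.

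The paper closes this gap by adding a fourth component $\{L,R\}$ to every node's state, recording on which side of that node the head currently lies. Initially the head is placed at the left endpoint (via a distinguished input state, as you also suggest) and every other node is initialized with $L$. When a non-active node sees a ``move left'' message, it becomes active if and only if its flag is $R$ (head was to its right, hence moving toward it); the neighbor on the other side has flag $L$ and correctly stays put. The formerly active node updates its own flag when it relinquishes the head. Adding this one bit to your state repairs the handoff; the rest of your plan --- random transitions via the nondeterminism in $\delta$, endpoint handling via input states, the final halt broadcast --- is fine and matches the paper.
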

\def\ProofLemmaSimulateRlba{
Let $n$ be the number of cells in the tape of the rLBA.
Then, the path network has $n$ nodes, each corresponding to one cell of the
tape, i.e., we identify a node $v$ of the path nFSM with a certain cell on the
tape.
Let $\Gamma$ be the working alphabet and $P$ be the state space of the rLBA.
The nFSM protocol is designed so that the state of node $v$ indicates:
(1) which letter from $\Gamma$ is written in $v$;
(2) if the head of the rLBA currently points to $v$;
(3) the current state of the rLBA, which is allowed to be incorrect if (2) is
false; and
(4) if the head is currently located to the left or to the right of $v$.
Hence, we fix $Q = \Gamma \times \{ 0, 1 \} \times P \times  \{ L, R \}$.
The alphabet of the nFSM is $\Sigma = \{ L, R \} \times P$.

Suppose that the input to the rLBA is $\gamma_1 \dots \gamma_n \in \Gamma^n$.
Then, we assume that the initial state of the $i$th node in the path is
$(\gamma_i, h, p_0, L)$, where $p_0$ is the initial state of the Turing
machine and
\[
h =
\begin{cases}
1 \quad\text{if } i = 1 \\
0 \quad\text{if } i > 1 \, .
\end{cases}
\]
Note that the distinction between the initial state of the first node in the
path and the initial states of all other nodes is without loss of generality.
Indeed, as the first and last nodes have degree $1$ and all interior nodes
have degree $2$, it is easy for a node to ``decide'' (under the nFSM model) if
it is an interior node.
Distinguishing between the first and last nodes is unavoidable if one wants to
distinguish between the inputs $\gamma_1\dots \gamma_n$ and $\gamma_n \dots
\gamma_1$.

At all times, we maintain the invariant that exactly one node is in a state in
$\Gamma \times \{1\} \times P \times \{ L, R \}$ --- denote this node as
\emph{active} --- whereas all other nodes are in a state in $\Gamma \times
\{ 0 \} \times P \times \{ L, R \}$.
Only the active node can transmits messages;
all other nodes remain silent and listen.
If an non-active node $v$ receives a message indicating that the head should
move to the left (respectively, right), and $v$'s state indicates that the
head is currently to its right (resp., left), then $v$ becomes the active
node;
otherwise, $v$ does not react to this message.
Now, the nodes simulate the behavior of the rLBA by calculating the next state
of the rLBA based on the rLBA's transition function (which is hard-wired in the
FSM) and updating their own states accordingly.
The assertion follows.
} 
\LongVersion 
\begin{proof}
\ProofLemmaSimulateRlba{}
\end{proof}
\LongVersionEnd 

\ShortVersion 
\clearpage

\pagenumbering{roman}
\appendix

\renewcommand{\theequation}{A-\arabic{equation}}
\setcounter{equation}{0}

\begin{center}
\textbf{\Large{APPENDIX}}
\end{center}

\section{Proving Theorem~\ref{theorem:Synchronizer}}
\label{appendix:Synchronizer}
Our goal in this section is to design a synchronizer for the nFSM model, thus
establishing Theorem~\ref{theorem:Synchronizer}.

\DetailsSynchronizer{}

\section{Coloring a Tree with $3$ Colors}
\label{appendix:Coloring}
\SectionTreeColoring{}

\section{Proving Lemma~\ref{lemma:SimulateNfsm}}
\label{appendix:SimulateNfsm}
\ProofLemmaSimulateNfsm{}

\section{Proving Lemma~\ref{lemma:SimulateRlba}}
\label{appendix:SimulateRlba}
\ProofLemmaSimulateRlba{}

\ShortVersionEnd 
\clearpage
\renewcommand{\thepage}{}

\bibliographystyle{abbrv}
\bibliography{references}

\end{document}